\newif\ifarxiv
\DeclarePairedDelimiter{\abs}{\lvert}{\rvert}
\DeclarePairedDelimiter{\parens}{(}{)}
\DeclarePairedDelimiter{\bracks}{[}{]}
\DeclarePairedDelimiter{\ceil}{\lceil}{\rceil}
\newcommand{\EstimateMean}{\textsc{Reduced-Mean}\xspace}
\newcommand{\cD}{\mathcal{D}}
\newcommand{\ind}{\mathds{1}}
\newcommand{\Prob}[2]{\Pr_{#1}\parens*{#2}}
\newcommand{\AdaptiveSampling}{\textsc{Threshold-Sampling}\xspace}
\title{Fully Dynamic Algorithm for Constrained Submodular Optimization} 
\newtheorem{theorem}{Theorem}[section]
\newtheorem{lemma}[theorem]{Lemma}
\newtheorem{observation}[theorem]{Observation}
\newtheorem{claim}[theorem]{Claim}
\newtheorem{fact}[theorem]{Fact}
\newtheorem{definition}[theorem]{Definition}
\newtheorem{invariant}{Invariant}
\crefname{theorem}{Theorem}{Theorems}
\Crefname{lemma}{Lemma}{Lemmas}
\Crefname{fact}{Fact}{Facts}
\Crefname{claim}{Claim}{Claims}
\Crefname{observation}{Observation}{Observations}
\Crefname{invariant}{Invariant}{Invariants}
\newcommand{\OPT}{\operatorname{OPT}}
\newcommand{\pred}{\operatorname{pred}}
\newcommand{\bbRp}{{\mathbb{R}_{\ge 0}}}
\newcommand{\EE}{\operatorname{\mathbb{E}}}
\newcommand{\PP}{\operatorname{\mathbb{P}}}
\newcommand{\ee}[1]{\EE \left[ #1 \right]}
\newcommand{\prob}[1]{\PP \left[ #1 \right]}
\newcommand{\setc}[2]{\left\{#1\,\left\vert\vphantom{#1#2}\right. #2\right\}}
\newcommand{\rb}[1]{\left( #1 \right)} % round Brackets
\newcommand{\Init}{\textsc{Initialization}\xspace}
\newcommand{\Insertion}{\textsc{Insertion}\xspace}
\newcommand{\Peeling}{\textsc{Peeling}\xspace}
\newcommand{\Deletion}{\textsc{Deletion}\xspace}
\newcommand{\LevelConstruct}{\textsc{Level-Construct}\xspace}
\newcommand{\BucketConstruct}{\textsc{Bucket-Construct}\xspace}
\newcommand{\Sieve}{\textsc{SieveStreaming}\xspace}
\newcommand{\AlgSimple}{\textsc{Alg-Simple}\xspace}
\newcommand{\Random}{\textsc{Rnd}\xspace}
\newcommand{\Our}{\textsc{Alg}}
\newcommand{\OurZ}{\Our_{0.0}\xspace}
\newcommand{\OurTwo}{\Our_{0.2}\xspace}
\newcommand{\CNZ}{\textsc{CNZ}}
\newcommand{\CNZOne}{\CNZ_{0.1}\xspace}
\newcommand{\CNZTwo}{\CNZ_{0.2}\xspace}
\newcommand{\tn}{\tilde{n}}
\newcommand{\marginal}[2]{f\rb{#1 \mid #2}}
\newcommand{\ellstar}{\ell^{\star}}
\newcommand{\numtaus}{R}
\newcommand{\gopt}{\gamma}  % guess on OPT
\newcommand{\Supto}[1]{S_{\pred (#1)}}
\newcommand{\Xupto}[1]{X_{\pred (#1)}}
\newcommand{\appendixref}{\if\fullversion1 \cref{appendix:extra-experiments}\xspace \else the Appendix\xspace \fi}
\newcommand{\eqdef}{\stackrel{\text{\tiny\rm def}}{=}}
\newcommand{\SOL}{\text{Sol}}
\DeclareMathOperator{\polylog}{polylog}
\def\fullversion{1}
\title{Fully Dynamic Algorithm \\for Constrained Submodular Optimization}
\author{Silvio Lattanzi\thanks{Google Research. Email: {\tt\{silviol,ashkannorouzi,zadim\}@google.com}.} \qquad Slobodan Mitrović\thanks{MIT. Email: {\tt slobo@mit.edu}.} \qquad Ashkan Norouzi-Fard\footnotemark[1] \\ Jakub Tarnawski\thanks{Microsoft Research. Email: {\tt jakub.tarnawski@microsoft.com}.} \qquad Morteza Zadimoghaddam\footnotemark[1]}
\date{}
\begin{document}

\maketitle

%!TEX root = 00-Dynamic Submodular Maximization.tex
\begin{abstract}
The task of maximizing a monotone submodular function under a cardinality constraint is at the core of many machine learning and data mining applications, including data summarization, sparse regression and coverage problems. We study this classic problem in the fully dynamic setting, where elements can be both inserted and removed.
Our main result is a randomized algorithm that maintains an efficient data structure with a
 poly-logarithmic amortized update time  and yields a $\left(\nicefrac12 - \epsilon\right)$-approximate solution.
We complement our theoretical analysis with an empirical study of the performance of our algorithm.
\end{abstract}

\paragraph{Version v2.} This version fixes correctness issues in the previous version of this result, pointed out by the authors in~\cite{Banihashem23}; they also provide a fix. We independently provide another solution here. The main change is that in \Peeling, elements are now added one by one instead of in batches, and we explicitly check whether the marginal contribution of every added element is large enough. This implies that our claimed guarantees on the approximation ratio are now deterministic, rather than in expectation; instead, the runtime (oracle complexity) analysis has become more complex.

Since the publication of the previous version, in an exciting result, \cite{Chen22} have proved that any dynamic algorithm that maintains a better-than-$\nicefrac{1}{2}$ approximation must have an amortized query complexity that is polynomial in $n$.

%\ifarxiv\else\vspace{-1em}\fi
\section{Introduction}

Thanks to the ubiquitous nature of ``diminishing returns'' functions, submodular optimization has established itself as a central topic in machine learning, with a myriad of applications ranging from active learning~\cite{DBLP:journals/jair/GolovinK11} to sparse reconstruction~\cite{DBLP:conf/nips/Bach10, DBLP:conf/nips/DasDK12, DBLP:conf/icml/DasK11}, video analysis~\cite{DBLP:conf/nips/ZhengJCP14} and data summarization~\cite{DBLP:conf/acl/BairiIRB15}. In this field, the problem of maximizing a monotone submodular function under a cardinality constraint is perhaps the most central. Despite its generality, the problem can  be (approximately) solved using a simple and efficient greedy algorithm~\cite{nemhauser1978analysis}.

However, this classic algorithm is inefficient when applied on modern large datasets. To overcome this limitation, in recent years there has been much interest in designing efficient streaming~\cite{badanidiyuru2014streaming, chakrabarti2014submodular, DBLP:conf/soda/BuchbinderFS15a, DBLP:journals/corr/abs-1802-07098, DBLP:conf/icml/Norouzi-FardTMZ18} and distributed algorithms~\cite{mirrokni2015randomized, mirzasoleiman2015distributed, barbosa2016new, ene2019submodular} for submodular maximization. 

Although those algorithms have found numerous applications, they are not well-suited for the common applications where data is highly dynamic. In fact, real-world systems often need to handle evolving datasets, where elements are added and deleted continuously. For example, in a recent study~\cite{dey}, Dey et al.~crawled two snapshots of 1.4 million New York City Facebook users several months apart and reported that 52\% of them had changed their profile privacy settings significantly during this period. Similarly, Snapchat processes several million picture uploads and deletions daily; Twitter processes several million tweet uploads and deletions daily. As one must still be able to run basic machine learning tasks, such as sparse recovery or data summarization, in such highly dynamic settings, we need fully \emph{dynamic~algorithms}: ones able to \emph{efficiently} handle a stream containing not only insertions, but also an arbitrary number of deletions, with small processing time per  update.
%, the \emph{amortized time per update}.

The general dynamic setting is classic and a staple of algorithm design, with many applications in machine learning systems.
However, for many problems it is notoriously difficult to obtain efficient algorithms in this model.
In the case of submodular maximization,
algorithms have been proposed for the specialized settings of sliding windows~\cite{DBLP:journals/corr/ChenNZ16, DBLP:conf/www/EpastoLVZ17} and
%two-stage
robustness~\cite{DBLP:conf/nips/MitrovicBNTC17, DBLP:conf/icml/0001ZK18}.
%,
%which are special cases of the dynamic setting.
However, as we discuss below,
these solutions cannot handle the full generality of the described real-world scenarios.

\paragraph{Our contribution.}
In this paper
we design an efficient fully dynamic algorithm
for submodular maximization under a cardinality constraint.
Our algorithm:
\begin{itemize} %\ifarxiv\else \setlength\itemsep{-1pt} \fi
	\item takes as input a sequence of arbitrarily interleaved insertions and deletions,
	\item after each such update, it continuously maintains a solution whose value is
	in expectation
	at least $(\nicefrac12 - \epsilon)$ times the optimum of the underlying dataset at the current time,
	\item has amortized time per update that is poly-logarithmic in the length of the stream.
\end{itemize}
This result settles the status of submodular maximization as \emph{tractable} in the dynamic setting.
We also empirically validate the efficiency of our algorithm in several
applications.

\paragraph{Related work.}
The question of computing a concise summary of a stream of $n$ data points on the fly
was first addressed by \emph{streaming} algorithms.
This line of work focuses on using small space,
independent of (or only poly-logarithmically dependent on) $n$.
The \Sieve algorithm~\cite{badanidiyuru2014streaming}
achieves a $(\nicefrac12 - \epsilon)$-approximation in this model,
which is tight~\cite{feldman2020oneway}.
The main thresholding idea of \Sieve
%--
%carefully pick a threshold $\tau$
%and select
%each arriving item whose marginal contribution to the set of already selected items is above $\tau$
%--
has had a large influence on recent submodular works, including ours.
However, streaming algorithms do not support deletions.
In fact, the low-memory requirement is fundamentally at odds with the dynamic setting,
as any approximation algorithm for the latter must store all stream elements.\footnote{If even one element is not stored by an algorithm, an adversary could delete all other elements, bringing the approximation ratio down to $0$.}
A natural idea is to adapt streaming algorithms to deletions
by storing the stream and recomputing the solution when it loses elements.
However, this takes $\Omega(n)$ time per deletion,
and is also shown to be inefficient in our experimental evaluations.

A notable related problem
is that of maintaining a summary
that focuses only on recent data (e.g., the most recent one million data points).
This task is captured by the \emph{sliding window model}. In particular, \cite{DBLP:journals/corr/ChenNZ16, DBLP:conf/www/EpastoLVZ17} give algorithms that optimize a monotone submodular function under the additional constraint that only the last $W$ elements of the stream can be part of the solution. Unfortunately this setting, while crucial for the data freshness objective, is unrealistic for real-world dynamic systems, where it is impossible to assume that data points are deleted in such structured order. In particular, emerging privacy concerns and data protection regulations
require data processing platforms to respond rapidly to users' data removal requests. This means that the arrival and removal of data points follows an arbitrary and non-homogeneous pattern.

Another important task is that of generating a summary that is robust to a specific number $D$ of adversarial deletions. This setting is the inspiration for the \emph{two-stage deletion-robust model}. 
In the first stage, elements are inserted, and the algorithm must retain an intermediate summary of limited size.
In the second stage, an adversary removes a set of up to $D$ items. The algorithm then needs to find a final solution from the intermediate summary while excluding the removed items. Despite the generality of the deleted items being arbitrary, this framework assumes that all deletions occur after all items have been introduced to the system, which is often unrealistic and incompatible with privacy objectives.
Furthermore, in the known algorithms for this setting~\cite{DBLP:conf/nips/MitrovicBNTC17, DBLP:conf/icml/0001ZK18}, the time needed to compute a single solution depends linearly on $D$, which could be as large as the size $n$ of the entire dataset. Therefore a straightforward use of these methods in fully dynamic settings would result in $\Omega(n)$ per-update time, which is prohibitively expensive.

Finally, a closely related area is that of low-adaptivity complexity.
In particular, \cite{FMZSODA} is closely related to our work; we build upon the batch insertion idea of the Threshold Sampling algorithm introduced there.

%!TEX root = 00-Dynamic Submodular Maximization.tex
%\ifarxiv\else \vspace{-0.7em} \fi
\section{Preliminaries}
%\ifarxiv\else \vspace{-0.5em} \fi
We consider a (potentially large) collection $V$ of items, also called the \emph{ground set}. We study the problem of maximizing a \emph{non-negative monotone submodular function} $f : 2^V \to \bbRp$. Given two sets $X, Y \subseteq V$, the \emph{marginal gain} of $X$ with respect to $Y$ is defined as
\[
	\marginal{X}{Y} = f(X \cup Y) - f(Y) \,,
\]
which quantifies the increase in value when adding $X$ to $Y$. We say that $f$ is \emph{monotone} if for any element $e \in V$ and any set $Y \subseteq V$ it holds that $\marginal{e}{Y} \ge 0$. The function $f$ is \emph{submodular} if for any two sets $X$ and $Y$ such that $X \subseteq Y \subseteq V$ and any element $e \in V \setminus Y$ we have
\[
	\marginal{e}{X} \ge \marginal{e}{Y}.
\]
Throughout the paper,
we assume that
$f$ is \emph{normalized}, i.e., $f(\emptyset) = 0$.
We also assume that
$f$ is given in terms of a value oracle that computes $f(S)$ for given $S \subseteq V$.
As usual in the field, when we talk about running time, we are counting the number of oracle calls/queries, each of which we treat as a unit operation. The number of non-oracle-call operations we perform is within a polylog factor of the number of oracle calls.

\paragraph*{Submodularity under a cardinality constraint.}
The problem of maximizing a function $f$ under a \emph{cardinality constraint $k$} is defined as
selecting a set $S \subseteq V$
with $|S| \le k$
so as to maximize $f(S)$.
We will use $\OPT$ to refer to such a maximum value of $f$.

\paragraph*{Notation for dynamic streams.}
Consider a stream of insertions and deletions.
Denote by $V_i$ the set of all elements that have been inserted and not deleted up to the $i$-th operation.
Let $O_i$ be an optimum solution for $V_i$;
denote $\OPT_i = f(O_i)$.

In our dynamic algorithm we are interested in updating our data structure efficiently. We say that an algorithm has amortized update time $t$ if its total running time to process a worst-case sequence of $n$ insertions and
deletions is
in expectation
at most $nt$.

%!TEX root = 00-Dynamic Submodular Maximization.tex
\section{Overview of our approach and intuitive analysis}
\label{sec:overview}
In this section we provide an overview of the main techniques and ideas used in our algorithm. To that end we skip some details of the algorithm and present the arguments intuitively, while formal arguments are provided in \cref{sec:algorithm}. We start by noting that previous approaches either do not support deletions altogether, or support only a limited (and small) number of deletions (with linear running time per deletion) and so they do not capture many real-world scenarios.
In this work, we overcome this barrier by designing a novel fully dynamic data structure that has only poly-logarithmic amortized update time. %, which we explain in the rest of this section.
%Throughout this section, for the sake of simplicity we assume that we know the value of the optimum solution $\OPT$. 

We start with a few useful observations. For a moment, ignore the values of elements in the ground set $V$.
Consider a set $X$ of $k$ elements sampled uniformly at \emph{random} from $V$.\footnote{Hence, each element from $V$ is sampled with probability $k/n$.} The set $X$ is very robust against deletions (which, as a reminder, we asssume to be chosen \emph{independently} of the choice of $X$). Namely, in order to delete an $\epsilon$-fraction of the elements in $X$, one needs (in expectation) to delete an $\epsilon$-fraction of the elements in $V$. This property suggests the following fast algorithm, that we refer to by $\AlgSimple$, for maintaining a set of at most $k$ elements: sample $k$ elements uniformly at random from the ground set, and call that set $X$; after an $\epsilon$-fraction of the elements in $X$ is deleted, sample another $X$ from scratch. The current set $X$ represents an output after an update. \AlgSimple has expected running time $O(\nicefrac{1}{\epsilon})$ per deletion, and can also be extended to support insertions in $\polylog(n)$ time. The main issue with this approach is the lack of guarantees on the quality of the output solution after an update, i.e., the approach is oblivious to the values of the elements in $V$. For instance, the ground set might contain many \emph{useless} elements, hence selecting $k$ of them uniformly at random would not lead to a set of high utility. %Moreover, one cannot sample only good elements since the contribution of their union can be significantly less than the summation of their individual utilities, e.g., most of good elements might be a copy of the same element. 
The main idea in our paper is to partition the ground set into groups (that we call \emph{buckets}) so that applying \AlgSimple within each bucket outputs a robust set of high utility. Moreover, the union of these sampled elements provides close to optimal utility.
%Interestingly, we can show that it is possible to maintain such a partition in buckets using only poly-logarithmic time per update (insertion or deletion).

Our data structure, which we refer to by $A$, divides the elements into $T = \log n$ \emph{levels}, with each level subdivided into $R = \log k$ buckets. Informally speaking, each bucket is designed in such a way that selecting elements from it by \AlgSimple results in sets that are both robust and high-quality. Our algorithm maintains a set $S$ that represents the output solution at every point; it is constructed by applying \AlgSimple over distinct buckets. Different buckets might contribute different numbers of elements to $S$.

The structure of each level of $A$ is essentially the same. The main difference is that different levels maintain different numbers of elements, i.e., level $\ell$ maintains $O(\frac{n}{2^{\ell}} \cdot \polylog(n))$ many elements. Intuitively, and informally, levels with small $\ell$ are recomputed/changed only after many updates, while levels with large $\ell$, such as $\ell = T$, are sensitive to updates and recomputed more frequently. In particular, if we insert an extremely valuable element, then the level $\ell = T$ will guarantee that this newly added valuable element will appear in $S$. We now discuss the structure of $A$ in more detail.

%use $A_\ell$ to refer to \emph{all} the buckets in level $\ell$, and
We use $A_{i ,\ell}$ to refer to the $i$-th bucket in level $\ell$. Each level is associated with a maximum bucket-size, with level $0$ corresponding to the largest bucket-sizes. More precisely, we will maintain the invariant
\[
	|A_{i, \ell}| \leq \frac{n}{2^{\ell}} \cdot \polylog(n)
\]
for all $1 \leq i \leq R$. Organizing levels to correspond to exponentially decreasing bucket-sizes is one of the main ingredients that enables us to obtain a poly-logarithmic update time.
% We also keep a set of selected element for each level $\ell$ (which their union is the solution that achieve) denoted by $S_\ell$. 

Buckets within each level are ordered so as to contain elements of exponentially decreasing marginal values with respect to the elements chosen so far. To illustrate this partitioning, consider the first bucket of level $0$. Let $S$ be the set of elements representing our (partial) output so far; initially, $S = \emptyset$. Then, we define
\[
	A_{1, 0}= \{ e \in V \mid \tau_{1} \leq f(e\ |\ S) \leq \tau_0 \} \,,
\]
where $\tau_i \approx (1-\epsilon)^i \OPT$.\footnote{As a reminder, $\OPT$ denotes the maximum value of $f(S)$ over all $S \subseteq V$ such that $|S| \le k$.}
It is clear that the construction of $A_{1, 0}$ takes $\widetilde{O}(n)$ time. After constructing $A_{1, 0}$, our goal is to augment $S$ by some of the elements from $A_{1, 0}$ so that the marginal gain of each element added to $S$ is in expectation at least $\tau_1$. After augmenting $S$, we also refine $A_{1, 0}$. This is achieved by repeatedly performing the following steps:
\\
From $A_{1, 0}$ we randomly select a subset (of size at most $k - |S|$) of elements whose marginal gain with respect to to the solution
is at least $\tau_1$.
Then we add this set to $S$. Now, refine $A_{1, 1}$ by removing from it all elements whose marginal gain with respect to $S$ is less than $\tau_1$. If $|A_{1, 0}| \ge n/2$ and $|S| < k$, we repeat these steps. 

Let us now analyze the robustness of $S \cap A_{1, 0}$. The way we selected the elements added to $S$ enables us to perform a similar reasoning to the one we performed to analyze the robustness of $\AlgSimple$. Namely, when an element $e \in A_{1, 0}$ is added to $S$, it is always chosen \emph{uniformly at random} from $A_{1, 0}$. Also, the process of adding elements to $S$ from $A_{1, 0}$ is repeated while $|A_{1, 0}| \ge n/2$. In other words, $e$ is chosen from a large pool of elements, much larger than $k$. Hence, an adversary has to remove many elements, $\epsilon |A_{i, \ell}| \ge \epsilon n / 2$ in expectation, to remove an $\epsilon$-fraction of elements added from $A_{i, \ell}$ to $S$.
Combining this observation with the fact that the construction of $A_{1, 0}$ takes $\widetilde{O}(n)$ time is key to obtaining to the desired update time\footnote{Note that actually achieving the desired running time without any assumption requires further adjustments to the algorithm and more involved techniques that we introduce in further sections.}.
%For instance, the elements that is added to $S$ in the second iteration depends on elements added in previous iteration an so on. Arguing that this does not effect the running time requires more advanced arguments in the behavior of our algorithm. 

Note that so far we have assumed that a good solution can be constructed looking only at elements with marginal value larger than $\tau_1$. Unfortunately this is not always the case and so we need to extend our construction. To construct the remaining buckets $A_{i, \ell}$, we proceed in the same fashion as for $A_{1, 0}$ in the increasing order of $i$. The only difference is that we consider decreasing thresholds: 
\[
	A_{i,\ell} = \{ e \in V \mid \tau_{i}  \leq f(e\ |\ S) \leq \tau_{i-1} \} \,,
\]
where $S$ is always the set of elements chosen so far. Once all the buckets in level $0$ are processed, we proceed to level $1$. The main difference between different layers is that for level $\ell$ we iterate while $|A_{i, \ell}| \ge n/2^{\ell}$ and $|S| < k$. So, in every level we explore more of our ground set. Importantly, we can show that on every level we consider a ground set that decreases in size significantly.

At first, it might be surprising that from bucket to bucket of level $\ell$ we consider elements in decreasing order of their marginal gain, and then in level $\ell + 1$ we again begin by considering elements of the largest gain. Perhaps it would be more natural to first exhaust all the elements of the largest marginal gain, and only then consider those of lower gain. However, we remark that the smallest value of $\tau_i$ that we consider is at least $\Theta(\OPT/k)$. Hence, selecting for $S$ any $k$ elements whose marginal contribution is at least $\tau_i$ already leads to a good approximation.

\paragraph{Handling Deletions.} Assume that an adversary deletes an element $e$. If $e \notin S$, we remove $e$ only from the buckets it belongs to, without any extra recomputation. If $e \in S$, let $A_{i,\ell}$ be the bucket from which $e$ is added to $S$. To update $S$, we reconstruct $A$ from $A_{1,\ell}$. We now informally bound the running time needed for this reconstruction. The probability that an element from $A_{1, \ell}$ belongs to $S$ is $\frac{t}{n/2^{\ell}}$, where $t$ is the number of elements selected to $S$ from $A_{i,\ell}$. Saying it differently, an adversary has to (in expectation) remove $\frac{n/2^{\ell}}{t}$ elements from $A_{i, \ell}$ before it removes an element from $S \cap A_{i, \ell}$. Moreover the running time of a reconstruction of $A_{1, \ell}$ is $\widetilde{O}(\nicefrac{n}{2^\ell})$. Putting these two together, we get that expected running time of reconstruction per deletion is $O(t)\cdot\polylog(n)$. To reduce the update time to $\polylog{n}$, we reconstruct $S$ only if, since its last reconstruction, its value has dropped by a factor $\epsilon$. Since the elements in $A_{i, \ell}$ have similar marginal gain, an adversary would need to remove roughly $\epsilon t$ elements from $S \cap A_{i, \ell}$ to invoke a recomputation of $A_{1, \ell}$, leading to an amortized update time of $\polylog(n)$. Unfortunately, formalizing this intuition is somewhat subtle, as elements are removed from multiple buckets and each removal decreases the value of $S$.

\paragraph{Handling Insertions.} Along with $A$, we maintain buffer sets $B_1, \ldots , B_T$. Roughly speaking, our algorithm postpones processing insertions into level $\ell$ until there are $n / 2^{\ell}$ many of them; this enables us to obtain efficient amortized update time of the structure on level $\ell$. The buffer set $B_{\ell}$ is used to store these insertions until they are processed.

More precisely, when an element is inserted, the algorithm adds it to all the sets $B_{\ell}$. When, for any $\ell$, the size of $B_{\ell}$ becomes $\frac{n}{2^{\ell}}$, we add the elements of $B_{\ell}$ to $A_{1, \ell}$, reconstruct the data structure from the $\ell$-th level, and also empty $B_{\ell}$. This approach handles insertions \emph{lazily}. Notice that lazy updates should be done carefully, since if the newly inserted element has very high utility, we need to add it to the solution immediately. During the execution of the algorithm, $B_\ell$ essentially represents those elements that we have not considered in the construction of buckets in $A_{i,\ell}$ for $0 \leq \ell \leq T$. The property that the running time of constructing $A_{i, \ell}$ is $\widetilde{O}(\frac{n}{2^{\ell}})$ implies that the amortized running time per insertion is also $\polylog(n)$. Also observe that we add $B_T$ to $A_{1, T}$ after any element is inserted, which enables us to maintain a good approximate solution at all times. In particular, if an element $e$ of very large marginal gain given $S$ is inserted, e.g., $f(e\ |\ S) > \OPT / 2$, then it will be processed via $B_T$ and added to $S$. In general, if there are $2^j$ inserted elements that collectively have very large gain given $S$, then they will be processed via $B_{T - j}$ and potentially used to update $S$.

%!TEX root = 00-Dynamic Submodular Maximization.tex 
%\ifarxiv\else \vspace{-0.7em} \fi
\section{The algorithm} \label{sec:algorithm}
%\ifarxiv\else \vspace{-0.5em} \fi
We are now ready to describe our algorithm. For the sake of simplicity, we present an algorithm that is parametrized by $\gopt$:
a guess for the value $\OPT$.
Moreover we assume that we know the maximum number of elements available at any given time ($\max_{1 \leq t \leq m} |V_t|$), which is upper-bounded by $n$.
Later we show how to remove these assumptions.

%We proceed to describe the algorithm parametrized by $\gopt$.

Our algorithm maintains a data structure that uses three families of element sets: $A$ and $S$ indexed by pairs $(i,\ell)$ and $B$ indexed by $\ell$. For an integer $\numtaus$ that we will set later, the algorithm also maintains a sequence of thresholds $\tau_0 > \ldots > \tau_\numtaus$ (indexed by $i$), where we think that $\tau_0 \approx \gopt$ and $\tau_\numtaus \approx \gopt / (2k)$. 
We use $S_{j, \ell}$ to refer to the elements chosen to $S$ from bucket $j$ of level $\ell$.
Let $\Supto{i, \ell}$ be the following union of sets:
\[
	\Supto{i, \ell} \eqdef \bigcup_{1 \le j \le \numtaus, 0 \le r < \ell} S_{j, r} \cup \bigcup_{1 \le j \le i} S_{j, \ell} .
\]
In words, a set $\Supto{i, \ell}$, where ``pred'' refers to ``predecessors'', defines the subset of elements of $S$ chosen from the buckets that precede bucket $i$ of level $\ell$, including that bucket itself.
At level $\ell$ and for index $i$, we define $A_{i,\ell}$ to be the set of items with marginal value with respect to the set $\Supto{i, \ell}$ in the range $[\tau_i, \tau_{i-1}]$. While $A_{i,\ell}$ has at least $2^{T-\ell}$ items, we use a procedure called \Peeling\footnote{Algorithm \Peeling is an implementation of the ideas behind \AlgSimple described in \cref{sec:overview}.} to select a random subset of $A_{i,\ell}$ to be included into the solution set $S_{i,\ell}$. This can be done in multiple iterations; each time, a randomly chosen batch of items will be inserted into $S_{i,\ell}$.
This batch insertion logic is named $\BucketConstruct$ and summarized as \cref{alg:bucket-construct}.
The solution that our algorithm returns is $\Supto{\numtaus,T}$,
i.e., the union of all sets $S_{i,\ell}$,
and  we denote by $\SOL_t$  this set after the $t$-th operation.

In order to implement our algorithm efficiently, we need to be able to select a high-quality random subset of $A_{i,\ell}$ quickly. Our data structure enables us to do this using
the procedure \Peeling, summarized as \cref{alg:new-sampling}, which is inspired by the work~\cite{FMZSODA}.\footnote{We invoke \Peeling on the function
$f'(e) = f(e \mid \Supto{i,\ell})$, which is monotone submodular.} This procedure takes as input a set $N$ and selects a random set $S$ such that: i) the contribution of $S$ is at least $\tau |S|$,
ii) conditioned on adding $S$ to the solution, a large fraction of remaining elements in $N$ have contribution less than $\tau$,
iii) it uses only an expected logarithmic number of oracle queries per added element.

To maintain the above batch insertion logic with every insertion, the algorithm may need to recompute many of the $A$-sets, which blows up the update time. To get around this problem, we introduce buffer sets $B_\ell$ for each level $0 \leq \ell \leq T$. 
Each buffer set $B_\ell$ has a capacity of at most $2^{T-\ell}-1$ items. 
When a new item $x$ arrives, instead of recomputing all $A$-sets, we insert $x$ into all buffer sets. If some  buffer sets exceed their capacity, we pick the first one (with the smallest $\ell^*$) and reconstruct all sets in levels beginning from $\ell^*$. We call this reconstruction process $\LevelConstruct$. It is presented as \cref{alg:level-construct}. The insertion process in summarized as \cref{alg:insertion}. 
 
When deleting an element $x$, our data structure is not affected if the deleted item $x$ does not belong to any set $S_{i,\ell}$. But if it is deleted from some $S_{i,\ell}$, we need to recompute the data structure starting from $S_{i,\ell}$. To optimize the update time, we perform this update operation in a lazy manner as well. We recompute only if an $\epsilon$-fraction of items in $S_{i,\ell}$ have been deleted since the last time it was constructed. To simplify the algorithm, we reconstruct the entire level $\ell$ and also the next levels $\ell+1, ...$ in this case. The deletion logic is summarized as \cref{alg:deletion}.

We initialize all sets as empty. The sequence of thresholds $\tau$ is set up as a geometric series parametrized by a constant  $\epsilon_1 > 0$. 

	\begin{algorithm}[H]
		\caption{\Init} \label{alg:initialization}
		\begin{algorithmic}[1]
			\STATE $\numtaus \leftarrow \log_{1+\epsilon_1} (2k)$ \label{line:define-numtaus}
			\STATE $\tau_i \leftarrow \gopt (1+\epsilon_1)^{-i} \quad \forall 0 \leq i \leq \numtaus$ \label{line:define-tau}
			\STATE $T \leftarrow \log n$ \label{line:definition-of-T}
			\STATE $A_{i,\ell} \leftarrow \emptyset \quad \forall  1 \leq i \leq \numtaus \quad 0 \leq \ell \leq T$
			\STATE $S_{i,\ell} \leftarrow \emptyset \quad \forall  1 \leq i \leq \numtaus \quad 0 \leq \ell \leq T$
			\STATE $B_{\ell} \gets \emptyset \quad \forall 0 \leq \ell \leq T$
		\end{algorithmic}
	\end{algorithm}	
	\begin{algorithm}[H]
		\caption{$\Insertion(e)$} \label{alg:insertion}
		\begin{algorithmic}[1]
			\STATE $B_\ell \gets B_\ell \cup \{e\} \quad \forall 0 \leq \ell \leq T$ \label{line:insertion-add-e}
			\STATE  $V \gets V \cup \{e\}$

			\IF{there exists an index $\ell$ such that $|B_\ell| \ge 2^{T - \ell}$ \label{line:insertion-large-B-ell}}
			\STATE Let $\ellstar$ be the smallest such index \label{line:choice-of-ellstar}
			\STATE $S_{i',\ell'} \leftarrow \emptyset \quad \forall \ellstar \leq \ell' \leq T \quad \forall 1 \leq i'\leq \numtaus$ \label{line:empty-S-insertion}
			\STATE $B_{\ell'} \gets \emptyset \quad \forall \ellstar \leq \ell' \leq T$ \label{line:insert-reset-all-B-ell}
			\STATE $\LevelConstruct(\ellstar)$ \label{line:insertion-invokes-level-construct}
			\ENDIF
		\end{algorithmic}
	\end{algorithm}
	\begin{algorithm}[H]
		\caption{$\Deletion(e)$} \label{alg:deletion}
		\begin{algorithmic}[1]
			\STATE $A_{i, \ell} \gets A_{i, \ell} \setminus \{e\} \quad \forall  1 \leq i \leq \numtaus \quad 0 \leq \ell \leq T$ \label{line:remove-from-A}
			\STATE $B_\ell \gets B_\ell \setminus \{e\} \quad \forall 0 \leq \ell \leq T$ \label{line:remove-from-B}
			\STATE $V \gets V \setminus \{e\}$
			\IF {$e \in \Supto{\numtaus, T}$}
			\STATE Let $S_{i,\ell}$ be the set containing $e$
			\STATE Remove $e$ from $S_{i,\ell}$
			\IF {the size of $S_{i,\ell}$ has reduced by $\epsilon$ fraction since it was constructed \label{line:S-i-ell-reduced}}
			\STATE $S_{i',\ell'} \gets \emptyset \quad \forall \ell \leq \ell' \leq T \quad \forall 1 \leq i'\leq \numtaus$\label{line:empty-S-deletion}
			\STATE $\LevelConstruct(\ell)$ \label{line:deletion-level-construct}
			\ENDIF
			\ENDIF
		\end{algorithmic}
	\end{algorithm}
	\begin{algorithm}[H]
		\caption{$\LevelConstruct(\ell)$} \label{alg:level-construct}
		\begin{algorithmic}[1]
				\STATE $B_{\ell} \gets \emptyset$ \label{line:set-B-ell-empty}
	        \FOR{$i \gets 1 \ldots \numtaus$ \label{line:level-construct-loop}}
	            \IF{$\ell > 0$}
	            	\STATE $A_{i,\ell} \leftarrow B_{\ell - 1} \cup \bigcup_{j=1}^{R} A_{j, \ell-1}$ \label{line:previous-elements-l1}
	            \ELSE
		            \STATE $A_{i,\ell} \leftarrow V$ \label{line:previous-elements-l0}
		            
	            \ENDIF
	            \STATE $\BucketConstruct(i, \ell)$ \label{line:invoke-bucket-construct}
	        \ENDFOR
					
				\IF{$|\Supto{\numtaus, T}| \ge k$}
	        		\STATE $A_{i, \ell'} \gets \emptyset \quad \forall \ell \le \ell' \leq T\quad \forall  1 \leq i \leq \numtaus$ \label{line:reset-all-A-i-ell}
	        \ENDIF
					
			  \IF{$\ell < T$ and $|\Supto{\numtaus, T}| < k$ \label{lvlstop}}
	        		\STATE $\LevelConstruct(\ell + 1)$ \label{line:recursive-calls-level-construct}
	        \ENDIF
	    \end{algorithmic}
	\end{algorithm}
	\begin{algorithm}[H]
		\caption{$\BucketConstruct(i, \ell)$} \label{alg:bucket-construct}
		\begin{algorithmic}[1]
	        \REPEAT \label{line:repeat-bucket-construct}
	            \STATE $A_{i,\ell} = \{e \in A_{i,\ell} \ | \ \tau_i\leq f(e \ | \ \Supto{i, \ell}) \leq \tau_{i-1} \} $ \label{BCremoving}
	            \IF{$|A_{i, \ell}| \geq  2^{T-\ell}$ and $|\Supto{\numtaus, T}| < k$}
	                \STATE $S_{i,\ell} \leftarrow S_{i,\ell} \cup \Peeling (A_{i,\ell},f',k-|\Supto{\numtaus, T}|,\tau_i)$ \label{line:bucket-invoke-peeling}
	            \ENDIF
	        \UNTIL {$|A_{i, \ell}| <  2^{T-\ell}$ or $|\Supto{\numtaus, T}| \geq k$	 \label{line:condition-bucket-construct}}
	    \end{algorithmic}
	\end{algorithm}

To present \Peeling,
which is a refined variant of a building block in the algorithm 
\AdaptiveSampling of \cite{FMZSODA},
we need to define a distribution.
Let $N \subseteq V$ be some set of elements.

\begin{definition}
\label{def:indicator_distribution}
Given a set function $f : 2^N \rightarrow R^{\geq 0}$ and a subset $S \subset N$, 
let $x$ be an element chosen uniformly at random from $N$, and 
let $\mathcal{D}(N, S, f)$ denote the probability distribution over the indicator random variable
\begin{align*}
  I_S = \ind\bracks*{\marginal{x}{S} \ge \tau}.
\end{align*} 
\end{definition}

\Peeling makes use of a subroutine \EstimateMean, which is a standard unbiased
estimator for the mean of a Bernoulli distribution.
Its definition and properties are given in \cref{app:peeling}.
Intuitively, \EstimateMean attempts to answer the question: has the mean of the input distribution
(fraction of candidate elements with high enough marginal contribution to the current solution)
reduced below roughly $1-\epsilon_p/2$?

\begin{algorithm}
  \caption{\Peeling}
  \label{alg:new-sampling}
  \vspace{0.1cm}
  \textbf{Input:} Subset of items $N \subseteq V$, function $f : 2^N \rightarrow R^{\geq 0}$, constraint $k$, threshold $\tau$,
    error $\epsilon_p$
  \begin{algorithmic}[1]
    \STATE Set smaller error $\hat{\epsilon_p} \leftarrow \epsilon_p/3$
    \STATE $S \gets \emptyset$
    \WHILE{$|S| < k$}
        \IF{$\EstimateMean(\mathcal{D}(N, S, f), \hat{\epsilon_p})$} 
            \STATE \textbf{break}
        \ENDIF
        \STATE Sample element $x$ uniformly at random from $N$
        \IF{$\marginal{x}{S} \geq \tau$}
            \STATE $N \gets N \setminus \{x\}$
            \STATE $S \gets S \cup \{x\}$ 
        \ENDIF
    \ENDWHILE
    \STATE \textbf{return} $S$
  \end{algorithmic}
\end{algorithm}

%!TEX root = 00-Dynamic Submodular Maximization.tex
%\ifarxiv\else \vspace{-0.7em} \fi
\section{Analysis of the algorithm}
%\ifarxiv\else \vspace{-0.5em} \fi

We now state two technical theorems, and in \cref{sec:combining} we show how to combine them into the main result. Here $\epsilon, \epsilon_1, \epsilon_p > 0$ are parameters of our algorithm; they affect both approximation ratio and oracle complexity.
Intuitively, they should be thought of as small constants.
%(As a reminder, our approach consists of five methods \Init, \BucketConstruct, \Insertion, \Deletion and \LevelConstruct that are given as \cref{alg:initialization} through \cref{alg:level-construct}.)

\begin{restatable}{theorem}{theoremupperbound}\label{theorem:upper-bound}
Let $\SOL_i$ be the solution of our algorithm and $\OPT_i$ be the optimal solution value after $i$ updates. Moreover, assume that $\gopt$ in \cref{alg:initialization} is such that $(1+\epsilon_p)\OPT_i \geq \gopt \geq \OPT_i$. Then  for any $1 \leq i \leq n$  we have $f(\SOL_i) \geq(1-\epsilon_p -\epsilon(1+\epsilon_1)) \frac{\OPT_i}{2}$.
\end{restatable}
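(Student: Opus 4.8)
Fix the time $i$ and write $S=\SOL_i$. The argument proceeds in three parts: a per-bucket lower bound, a case split on why the last reconstruction stopped, and a short combination.

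\emph{(1) Per-bucket bound.} Order the index pairs $(j,\ell)$ lexicographically by $(\ell,j)$; write $P_{j,\ell}$ for the union of all $S_{j',\ell'}$ with index preceding $(j,\ell)$, and let $t_{j,\ell}$ be the size of $S_{j,\ell}$ the last time it was (re)built. Since the $S_{j,\ell}$ are disjoint, $f(S)=\sum_{(j,\ell)}\marginal{S_{j,\ell}}{P_{j,\ell}}$, and I would show $\marginal{S_{j,\ell}}{P_{j,\ell}}\ge\bigl(1-\epsilon_p-\epsilon(1+\epsilon_1)\bigr)\tau_j t_{j,\ell}$ in three steps. (a) At the time $S_{j,\ell}$ was built, telescoping the \Peeling guarantee over the batches that \BucketConstruct appended shows its marginal over the then-current prefix $Q_{j,\ell}$ is at least $(1-\epsilon_p)\tau_j t_{j,\ell}$, since each batch has average marginal at least $(1-\epsilon_p)\tau_j$ over the running solution. (b) At most $\epsilon t_{j,\ell}$ of its elements have been deleted since (else \Deletion would have rebuilt it); ordering these deleted elements by the time they entered the bucket so that the set we remove each of them from still contains the solution present when it was inserted, submodularity shows each contributes at most $\tau_{j-1}=(1+\epsilon_1)\tau_j$, so at most $\epsilon(1+\epsilon_1)\tau_j t_{j,\ell}$ of value is lost. (c) After $S_{j,\ell}$ is built its prefix can only shrink --- enlarging it requires a \LevelConstruct of some level $\le\ell$, which would rebuild $S_{j,\ell}$ --- so $P_{j,\ell}\subseteq Q_{j,\ell}$, and submodularity transfers the bound. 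Summing over $(j,\ell)$ and taking expectations over \Peeling's internal randomness,
\[
  \E[f(S)]\;\ge\;\bigl(1-\epsilon_p-\epsilon(1+\epsilon_1)\bigr)\,\E[X],\qquad X\eqdef\sum_{(j,\ell)}\tau_j\,t_{j,\ell}.
\]

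\emph{(2) Case split.} Let $t^\star\le i$ be the time of the last \LevelConstruct call; after $t^\star$ only deletions triggering no rebuild occur. If that call stopped because $|\Supto{\numtaus,T}|$ reached $k$, then $\sum_{(j,\ell)}t_{j,\ell}\ge k$, hence $X\ge\tau_\numtaus k=\gopt/2\ge\OPT_i/2$ (using $\tau_j\ge\tau_\numtaus=\gopt/(2k)$ and $\gopt\ge\OPT_i$), and part~(1) already yields the claim. Otherwise the recursion ran up to level $T$ with $|\Supto{\numtaus,T}|<k$ throughout; since $2^{T-T}=1$, each \BucketConstruct$(j,T)$ peels $A_{j,T}$ until it is empty, so every level-$T$ bucket is empty at time $t^\star$. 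Now fix $e\in\cO_i$: it is feasible for $V_i$, so $f(\{e\})\le\OPT_i\le\gopt=\tau_0$, which means $e$ is never discarded ``from the top'' of any bucket; and if additionally $\marginal{e}{\SOL_{t^\star}}\ge\tau_\numtaus$, then its marginal over every sub-solution considered stays $\ge\tau_\numtaus$, so $e$ is never discarded from the bottom of the last bucket either --- hence $e$ would survive into some bucket at every level, contradicting the emptiness of the level-$T$ buckets. Therefore $\marginal{e}{\SOL_{t^\star}}<\tau_\numtaus$ for all $e\in\cO_i$ (or $e\in\SOL_{t^\star}$, where it is $0$), and the standard submodular estimate gives
\[
  \OPT_i=f(\cO_i)\le f(\cO_i\cup\SOL_{t^\star})\le f(\SOL_{t^\star})+\sum_{e\in\cO_i}\marginal{e}{\SOL_{t^\star}}\le f(\SOL_{t^\star})+k\tau_\numtaus\le f(\SOL_{t^\star})+\tfrac{(1+\epsilon_p)\OPT_i}{2},
\]
so $f(\SOL_{t^\star})\ge\tfrac{1-\epsilon_p}{2}\OPT_i$.

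\emph{(3) Combination.} We may assume $1-\epsilon_p-\epsilon(1+\epsilon_1)>0$, else the bound is vacuous. In the second case above, the deletion bookkeeping of step~(1b) applied to the set $D$ of elements removed in $(t^\star,i]$ (so $\SOL_{t^\star}=S\sqcup D$) gives $f(\SOL_{t^\star})-f(S)=\marginal{D}{S}\le\epsilon(1+\epsilon_1)X$, so $f(S)\ge\tfrac{1-\epsilon_p}{2}\OPT_i-\epsilon(1+\epsilon_1)X$; while part~(1) gives $X\le f(S)/\bigl(1-\epsilon_p-\epsilon(1+\epsilon_1)\bigr)$. Substituting the latter into the former and rearranging yields $f(S)\cdot\frac{1-\epsilon_p}{1-\epsilon_p-\epsilon(1+\epsilon_1)}\ge\frac{1-\epsilon_p}{2}\OPT_i$, i.e.\ $f(S)\ge\tfrac{1-\epsilon_p-\epsilon(1+\epsilon_1)}{2}\OPT_i$. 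Averaging over the two cases completes the proof.

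\emph{Main obstacle.} The delicate part, which I expect to consume most of the work, is making the deletion-robustness bookkeeping in steps~(1b) and~(3) --- and the ``maximality up to deletions'' claim in~(2) --- rigorous: a deleted element had marginal at most $\tau_{j-1}$ only relative to the sub-solution present \emph{when it was inserted}, and by time $i$ that sub-solution has shrunk because, as the overview notes, ``each removal decreases the value of $S$''. One must order deletions by insertion time and check that the surviving prefixes still dominate the relevant past sub-solutions; the subtlest point is the interaction with the levels below $\ellstar$ that are \emph{kept} across a \LevelConstruct, which were processed against an older solution, so one has to argue that the per-bucket $\epsilon$-fraction deletion threshold keeps the resulting discrepancy within the stated $\epsilon$-loss. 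The remaining parts of the proof are routine calculations plus the one-line substitution in part~(3).
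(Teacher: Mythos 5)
Your proposal is correct and follows essentially the same route as the paper's proof: the per-bucket \Peeling expectation guarantee, a case split on whether the last \LevelConstruct filled $k$ elements or exhausted every bucket down to $\tau_\numtaus=\gopt/(2k)$ (giving $f(\SOL_{t^\star})\ge\tfrac{1-\epsilon_p}{2}\OPT_i$), and a per-bucket loss of at most an $\epsilon$-fraction of elements, each costing at most $\tau_{j-1}=(1+\epsilon_1)\tau_j$. Your additive bookkeeping with the reverse-chronological ordering of deletions is just a slightly more explicit rendering of the paper's multiplicative $\bigl(1-\tfrac{\epsilon(1+\epsilon_1)}{1-\epsilon_p}\bigr)$ step, and it yields the same final bound.
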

\begin{restatable}{theorem}{theoremoraclecomplexity}\label{theorem:oracle-complexity}
	The amortized expected number of oracle queries per update is $O\left(\frac{\numtaus^3 \log^2 n}{\epsilon_p^2 \cdot \epsilon}\right)$, where $\numtaus$ equals $\log_{1+\epsilon_1} (2k)$ (see \cref{alg:initialization}).
\end{restatable}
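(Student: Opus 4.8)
The plan is to charge the total oracle cost over a stream of $n$ operations to three sources: (i) the cost of a single top-down reconstruction $\LevelConstruct(\ell^\star)$ for a given starting level $\ell^\star$, (ii) the number of reconstructions triggered by insertions, and (iii) the number of reconstructions triggered by deletions; then amortize. First I would bound the cost of $\LevelConstruct(\ell^\star)$: since it recurses through levels $\ell^\star, \ell^\star+1, \dots, T$, and at level $\ell$ it loops over $i = 1, \dots, \numtaus$ calling $\BucketConstruct(i,\ell)$, the work reduces to bounding $\BucketConstruct(i,\ell)$. Here the key structural fact is the invariant $|A_{i,\ell}| \le \frac{n}{2^\ell}\polylog(n)$ coming from the buffer-capacity rule $|B_\ell| < 2^{T-\ell}$ and the stopping condition $|A_{i,\ell}| < 2^{T-\ell} = n/2^\ell$ (with $T=\log n$): a level-$\ell$ reconstruction only ever reads from buffers and $A$-sets whose sizes are $O(n/2^\ell)$, and \Peeling (guarantee iii in the excerpt) uses only $O(\log(\cdot))$ oracle queries per invocation. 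The number of \REPEAT iterations inside one $\BucketConstruct$ is bounded because each \Peeling call, by guarantee (ii), removes a constant fraction of $A_{i,\ell}$ (those now below $\tau_i$), so there are $O(\log n)$ iterations; combined with the per-iteration filtering cost on a set of size $O(n/2^\ell)$, one call to $\BucketConstruct(i,\ell)$ costs $\tilde O(n/2^\ell)$ oracle queries. Summing over $i \le \numtaus$ and over levels $\ell \ge \ell^\star$ — the latter being a geometric series dominated by its first term — gives that $\LevelConstruct(\ell^\star)$ costs $O\!\left(\numtaus \cdot \frac{n}{2^{\ell^\star}} \cdot \polylog(n)\right)$ oracle queries. (I must be slightly careful that the $\polylog$ here is explicit enough to match the claimed $\numtaus^5\log^2 n$; this is bookkeeping, tracing the $\numtaus$ and $\log n$ factors from the \REPEAT loop, the $i$-loop, and \Peeling's logarithmic query count together with the $\eps_p^{-2}$ factor hidden in \Peeling's accuracy requirement.)

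Next I would amortize the insertion-triggered reconstructions. A reconstruction at level $\ell^\star$ is triggered by $\Insertion$ only when $|B_{\ell^\star}|$ reaches $2^{T-\ell^\star} = n/2^{\ell^\star}$, and it empties all buffers $B_{\ell'}$ for $\ell' \ge \ell^\star$. So between two consecutive level-$\ell^\star$ reconstructions, at least $n/2^{\ell^\star}$ fresh insertions must occur (each insertion adds one element to $B_{\ell^\star}$, and the buffer is only emptied by such a reconstruction or by one at a smaller level, which is even rarer). Hence over $n$ operations there are at most $O(2^{\ell^\star})$ level-$\ell^\star$ reconstructions, each costing $O(\numtaus \cdot \frac{n}{2^{\ell^\star}} \polylog n)$; the product is $O(\numtaus \cdot n \cdot \polylog n)$, independent of $\ell^\star$, and summing over the $T = \log n$ possible values of $\ell^\star$ gives $O(\numtaus \cdot n \cdot \polylog n)$ total, i.e.\ amortized $\polylog(n)$ per update from insertions.

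The deletion-triggered reconstructions are the main obstacle and require the randomness. A deletion triggers $\LevelConstruct(\ell)$ only when the element deleted lies in some $S_{i,\ell}$ \emph{and} the size of $S_{i,\ell}$ has dropped by an $\eps$-fraction since it was last constructed. The crucial point, as sketched in the overview, is that when $S_{i,\ell}$ was built by \Peeling it selected a uniformly random subset of $A_{i,\ell}$, and at the moment of selection $|A_{i,\ell}| \ge 2^{T-\ell} = n/2^\ell$; so if $t_{i,\ell} \eqdef |S_{i,\ell}|$, then each element of the then-current ground set present in $A_{i,\ell}$ lay in $S_{i,\ell}$ with probability at most $t_{i,\ell}/(n/2^\ell)$. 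Therefore, to delete an $\eps$-fraction (i.e.\ $\eps t_{i,\ell}$ elements) of $S_{i,\ell}$, an adversary must in expectation delete at least $\eps t_{i,\ell} \cdot \frac{n/2^\ell}{t_{i,\ell}} = \eps\, n/2^\ell$ elements from $A_{i,\ell}$. Since a level-$\ell$ reconstruction costs $O(\numtaus \cdot \frac{n}{2^\ell}\polylog n)$ and is paid for by $\eps n/2^\ell$ deletions (in expectation), the amortized expected cost per deletion charged to level-$\ell$, index-$i$ reconstructions is $O\!\left(\frac{\numtaus \polylog n}{\eps}\right)$; summing over $\ell \le T$ and $i \le \numtaus$ multiplies by another $\numtaus \log n$. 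The subtlety the excerpt flags — that an element may be deleted from several buckets at once, and that reconstructing level $\ell$ also rebuilds levels $> \ell$ — is handled by noting each physical deletion is counted once per $(i,\ell)$ pair it affects, of which there are at most $\numtaus(T+1)$, and that the cascading rebuild of higher levels is already included in the geometric-series bound for $\LevelConstruct(\ell)$. One must also verify that the expectation argument composes across the whole stream: I would make the per-deletion charge precise by a potential/token argument — when $S_{i,\ell}$ is (re)built, deposit $O(\numtaus\polylog n / \eps)$ tokens on each of the $n/2^\ell$ elements of the ground set inside $A_{i,\ell}$ (paid for in expectation by the rebuild cost, via the uniformity bound), and let a deletion of an element that sits in $S_{i,\ell}$ pay its rebuild using those tokens; linearity of expectation then gives the stated amortized bound. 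Collecting the three contributions and substituting $\polylog n$ with the explicit exponents from the \Peeling analysis and the loop structure yields the claimed $O\!\left(\frac{\numtaus^5 \log^2 n}{\eps_p^2\,\eps}\right)$.
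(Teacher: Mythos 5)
Your decomposition---bounding one call of $\LevelConstruct(\ell)$ via the size invariants and \Peeling's logarithmic query count, amortizing insertion-triggered rebuilds against the $2^{T-\ell}$ buffer capacity, and amortizing deletion-triggered rebuilds against the randomness of the selection into $S_{i,\ell}$, then summing over the $\numtaus\cdot T$ buckets---is exactly the route the paper takes, and your first two parts essentially coincide with the paper's lemmas on \LevelConstruct and on insertions. The gap is in the deletion part, which is the technical core. You treat $S_{i,\ell}$ as a single uniformly random subset of a pool of size at least $2^{T-\ell}$ and conclude that an adversary must delete $\eps\, 2^{T-\ell}$ elements in expectation before an $\eps$-fraction of $S_{i,\ell}$ is gone. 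But $S_{i,\ell}$ is assembled over many iterations of \BucketConstruct: each batch $D^t$ is uniform only within the current, already-filtered pool $A_{i,\ell}^t$, the batch sizes are random, and the trigger on Line~7 of \Deletion concerns the ratio $|X_r \cap S_{i,\ell}|/|S_{i,\ell}|$ (where $X_r$ denotes the first $r$ deleted elements), i.e.\ a ratio of two correlated random quantities, not the membership probability of a fixed element. The paper handles this by bounding $|X_r\cap S_{i,\ell}|/|S_{i,\ell}| \le \sum_t |X_r\cap D^t|/|D^t|$, bounding each term by $r/2^{T-\ell}$ in expectation, and controlling the number of iterations $I$ by $O(\log(n)/\epsilon_p)$ with high probability using the filtering guarantee of \cref{lem:adaptive_sampling}; this is precisely where an extra $\log(n)/\epsilon_p$ factor is lost, so the expected number of deletions before a rebuild is only $\Omega\bigl(\epsilon_p \eps\, 2^{T-\ell}/\log n\bigr)$, not $\eps\, 2^{T-\ell}$.

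This matters for your bookkeeping: the $\log^2 n$ and $\epsilon_p^{-2}$ in the theorem arise from exactly this loss combined with the $O(\numtaus^4 2^{T-\ell}/\epsilon_p)$ rebuild cost, so the final step in which you assert that collecting the contributions yields the stated exponents is not derivable from what precedes it---indeed, taken at face value your stronger robustness claim would give a \emph{better} bound than the theorem, a sign that the key step was skipped rather than proved. The token/potential scheme you sketch does not fill the hole: choosing how many tokens may safely be deposited per element of $A_{i,\ell}$ is equivalent to the missing probabilistic lemma (a lower bound on the expected number of deletions between consecutive rebuilds of a bucket), and one still needs the high-probability bound on the number of \Peeling iterations together with the observation that the rebuild cost is independent of the randomness determining when the trigger fires, both of which the paper uses to make the expectation argument compose across the stream.
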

\cref{theorem:upper-bound,theorem:oracle-complexity} are proved in
\cref{app:correctness,app:complexity} respectively.
They hold under the assumption that the algorithm knows a good estimate $\gamma$ of $\OPT_i$ (as in the statement of \cref{theorem:upper-bound}).
To remove this assumption,
we combine these ingredients with certain standard techniques and achieve the following result. Its proof is provided in \cref{sec:combining}. 
\begin{restatable}{theorem}{mainresult}\label{thm:correctness-main}
There is a dynamic algorithm for the monotone submodular maximization under a cardinality constraint problem
that
maintains a $(1-2\epsilon_p-\epsilon(1+\epsilon_1))/2$-approximate solution after each operation,
and whose amortized expected number of oracle queries per update is $O\left(\frac{\log^4 k \log^2 n}{\epsilon_1^3 \cdot \epsilon_p^3 \cdot \epsilon}\right)$.
\end{restatable}

%\input{45-peeling}
%\input{46-puttingtogether}
% !TEX root = 00-Dynamic Submodular Maximization.tex

\section{Empirical evaluation}
\label{sec:experiments}
%!TEX root = 00-Dynamic Submodular Maximization.tex
%compat=1.13,
\pgfplotsset{
	ignore legend/.style={every axis legend/.code={\renewcommand\addlegendentry[2][]{}}},
	error bars/y explicit,
	error bars/error bar style={solid}
}

\newlength{\plotwidth}
\setlength{\plotwidth}{6.5cm}
\newlength{\plotheight}
\setlength{\plotheight}{3.5cm}
\newlength{\plotxspacing}
\setlength{\plotxspacing}{0.9cm}

\pgfplotsset{resultplot/.style={%
  scale only axis,
  yticklabel shift={-2pt},
  enlarge x limits=false,
%  xlabel shift=-5pt,
  ylabel shift=-5pt,
%  label style={font=\scriptsize},
%  title style={font=\scriptsize,yshift=-4pt},
  grid=major,
  grid style={dotted,gray,thin},
  legend cell align=left,
  legend transposed=true,
  legend columns=-1,
  legend style={font=\small,anchor=north,at={(1.2,-0.3)}},
  every axis plot/.append style={line width=1pt,mark size=2pt},
  cycle list={{blue,mark=square},{red,mark=asterisk},{olive,mark=+},{green,mark=o},{violet,mark=|},{brown,mark=diamond}}}
}
\pgfplotsset{resultplotWithSize/.style={%
  resultplot,
  height=\plotheight,
  width=\plotwidth}}
  
 \pgfplotsset{resultKAndF/.style={%
  resultplotWithSize,
  error bars/x dir=none,
  error bars/y dir=both,
  error bars/y explicit,
  xlabel=$k$,
  ylabel=$f$}}

 \pgfplotsset{resultKAndOC/.style={%
  resultplotWithSize,
  error bars/x dir=none,
  error bars/y dir=both,
  error bars/y explicit,
  xlabel=$k$,
  ylabel=Oracle calls}}

 \pgfplotsset{resultTAndF/.style={%
  resultplotWithSize,
  error bars/x dir=none,
  error bars/y dir=both,
  error bars/y explicit,
  xlabel=time step,
  ylabel=$f$,
  every axis plot/.append style={line width=1pt,mark size=1pt},
  no marks}
	}

 \pgfplotsset{resultTAndOC/.style={%
  resultplotWithSize,
  error bars/x dir=none,
  error bars/y dir=both,
  error bars/y explicit,
  xlabel=time step,
  ylabel=Oracle calls,
  every axis plot/.append style={line width=1pt,mark size=1pt},
  no marks}
	}

 \pgfplotsset{resultKAndStddev/.style={%
  resultplotWithSize,
  error bars/x dir=none,
  error bars/y dir=both,
  error bars/y explicit,
  xlabel=k,
  ylabel=Stddev in $\%$}
	}

\pgfplotsset{timeplot/.style={%
  resultplot1,
  error bars/x dir=none,
  error bars/y dir=both,
  error bars/y explicit,
  xlabel=Problem size $n$,
  ylabel=Running time (s)}}

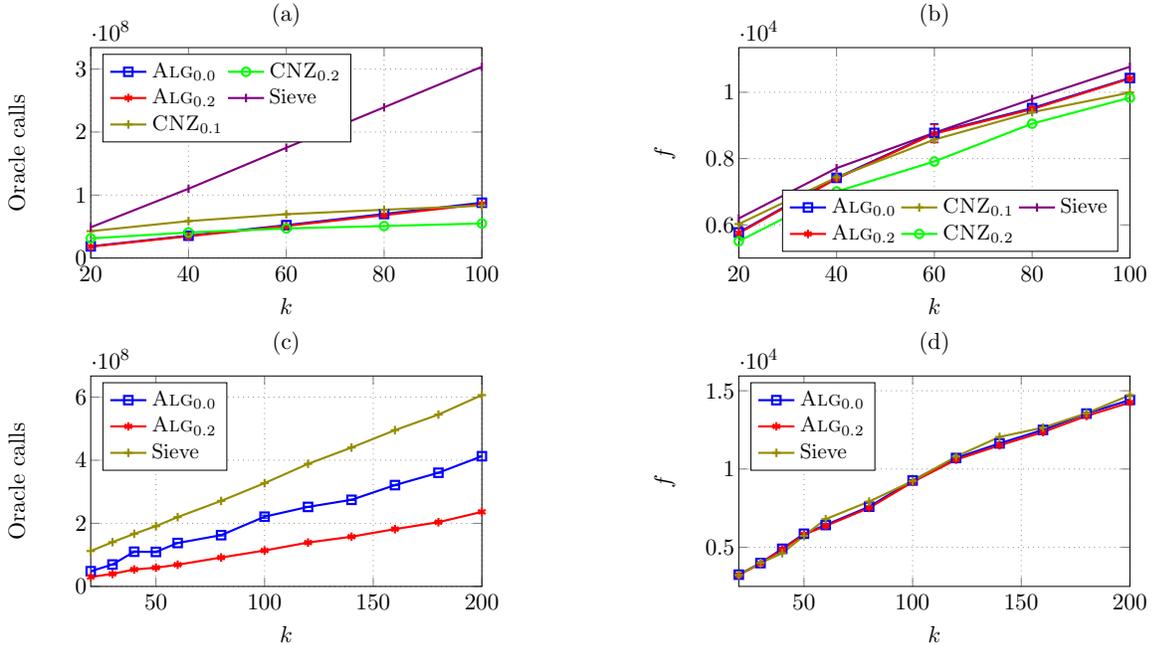
\begin{figure*}[t!]
    \begin{center}
        \begin{subfigure}[b]{0.48 \textwidth}
            \begin{tikzpicture}[scale=0.8]
            \begin{axis}[title=(a), resultKAndOC,name=timeplot1,ymin=0.0,legend pos=north west, legend columns=3]
						\addplot table[x=k,y=OC,y error=stddev] {experiments/results/enr_win_our00_OC.txt};
						\addplot table[x=k,y=OC,y error=stddev] {experiments/results/enr_win_our02_OC.txt};
						\addplot table[x=k,y=OC] {experiments/results/enr_win_cnz01_OC.txt};
						\addplot table[x=k,y=OC] {experiments/results/enr_win_cnz02_OC.txt};
						\addplot table[x=k,y=OC] {experiments/results/enr_win_sieve_OC.txt};
						%\addplot table[x=k,y=OC,y error=stddev] {experiments/results/enr_win_random_OC.txt};
            \legend{$\OurZ$, $\OurTwo$, $\CNZOne$, $\CNZTwo$, Sieve, \Random}
            \end{axis}
            \end{tikzpicture}
        \end{subfigure}
				\hfill
        \begin{subfigure}[b]{0.48 \textwidth}
            \begin{tikzpicture}[scale=0.8]
            \begin{axis}[title=(b), resultKAndF,name=timeplot1,ymin=5000,legend pos=south east, legend columns=2]
						\addplot table[x=k,y=f,y error=stddev] {experiments/results/enr_win_our00_f.txt};
						\addplot table[x=k,y=f,y error=stddev] {experiments/results/enr_win_our02_f.txt};
						\addplot table[x=k,y=f] {experiments/results/enr_win_cnz01_f.txt};
						\addplot table[x=k,y=f] {experiments/results/enr_win_cnz02_f.txt};
						\addplot table[x=k,y=f] {experiments/results/enr_win_sieve_f.txt};
						%\addplot table[x=k,y=f] {experiments/results/enr_win_random_f.txt};
            \legend{$\OurZ$, $\OurTwo$, $\CNZOne$, $\CNZTwo$, Sieve, \Random}
            \end{axis}
            \end{tikzpicture}
        \end{subfigure}
        \begin{subfigure}[b]{0.48 \textwidth}
            \begin{tikzpicture}[scale=0.80]
            \begin{axis}[title=(c), resultKAndOC,name=timeplot1,ymin=0.0,legend pos=north west,
            legend columns = 3]
						\addplot table[x=k,y=OC,y error=stddev] {experiments/results/twi_lts_our00_OC.txt};
            \addplot table[x=k,y=OC,y error=stddev] {experiments/results/twi_lts_our02_OC.txt};
						\addplot table[x=k,y=OC] {experiments/results/twi_lts_sieve_OC.txt};
						%\addplot table[x=k,y=OC,y error=stddev] {experiments/results/twi_lts_random_OC.txt};
            \legend{$\OurZ$, $\OurTwo$, Sieve, \Random}
            \end{axis}
            \end{tikzpicture}
        \end{subfigure}
				\hfill
        \begin{subfigure}[b]{0.48 \textwidth}
            \begin{tikzpicture}[scale=0.8]
            \begin{axis}[title=(d), resultKAndF,name=timeplot1,ymin=2500.0,legend pos=north west,
            legend columns = 3]
						\addplot table[x=k,y=f,y error=stddev] {experiments/results/twi_lts_our00_f.txt};
            \addplot table[x=k,y=f,y error=stddev] {experiments/results/twi_lts_our02_f.txt};
						\addplot table[x=k,y=f] {experiments/results/twi_lts_sieve_f.txt};
						%\addplot table[x=k,y=f,y error=stddev] {experiments/results/twi_lts_random_f.txt};
            \legend{$\OurZ$, $\OurTwo$, Sieve, \Random}
            \end{axis}
            \end{tikzpicture}
        \end{subfigure}
        \caption{\small				
					The plots in this figure are obtained for $f$ being the graph coverage function. Plots (a) and (b) show the results on the Enron dataset. We fix an arbitrary order of the Enron email addresses and process them sequentially over windows of size $30,000$. We first insert all elements, and then delete them in the same order. Plots (c) and (d) depict the results for the ego-Twitter dataset. In this experiment the insertions are performed in a random order, while deletions are performed starting from highest-degree nodes.
					\label{figure:plots-enron}
				}
    \end{center}
\end{figure*}

In this section we empirically evaluate our algorithm.
We perform experiments using a slightly simplified variant of our algorithm; see \cref{appendix:simple} for more details.
We note that this variant also maintains an almost $\nicefrac{1}{2}$-approximate solution after each operation; it differs in the bound on the expected number of oracle queries per update that we can obtain, which is $\tilde{O}(k)$.
A proof of these guarantees can also be found in~\cref{appendix:simple}.

The code of our implementations can be found at \url{https://github.com/google-research/google-research/tree/master/fully_dynamic_submodular_maximization}.
All experiments in this paper are run on commodity hardware.

We focus on the number of oracle calls performed during the computation and on the quality of returned solutions.
More specifically, we perform a sequence of insertions and removals of elements, and after each operation $i$ we output a high-value set $S_i$ of cardinality at most $k$. For a given sequence of $n$ operations, we plot:
%\ifarxiv
	\begin{itemize}
		\item \textbf{Total number of oracle calls} our algorithm performs for each of the $n$ operations.
		\item \textbf{Quality} of the average output set, i.e., $\sum_{i = 1}^n f(S_i) / n$. 
	\end{itemize}
%\else
%	\\
%	-- \textbf{Total number of oracle calls} our algorithm performs for each of the $n$ operations.\\
%	-- \textbf{Quality} of the average output set, i.e., $\sum_{i = 1}^n f(S_i) / n$. 
%\fi

\paragraph{Dominating sets.}
In our evaluation we use the dominating set objective function. Namely, given a graph $G = (V, E)$, for a subset of nodes $Z \subseteq V$ we define $f(Z) = |N(Z) \cup Z|$, where $N(Z)$ is the node-neighborhood of $Z$. This function is monotone and submodular.

%We remark that the underlying graph $G$ is not affected by insertions and removals -- e.g., 
%when a node $v$ is removed, it is no longer legal to use it in a solution, but $v$ remains in $G$ and still yields utility if it is covered by some other node that is legal to use.

\paragraph{Datasets and their processing.} We perform evaluations on the Enron $(|V| =36,692, |E|=183,831)$, the ego-Twitter $(|V| =81,306, |E|= 1,768,149)$, and the Pokec $(|V| =1,632,803, |E|= 30,622,564)$ graph datasets from SNAP Large Networks Data Collection~\cite{leskovec2015snap}. 
% Each of these datasets is a graph. %Enron consists of email accounts and emails exchanged between those accounts. An account is represented as a node, and there is an undirected edge between two accounts $i$ and $j$ if $i$ sent at least one email to $j$.

%We process both datasets as streams of nodes ordered arbitrarily. 
We run two types of experiments on the abovementioned datasets.
\begin{enumerate}
 \item We consider \textbf{a sliding window} of size $\ell$ over an arbitrary order of the nodes of the graph. When the window reaches a node, we add that node to the stream. Similarly, after $\ell$ insertions, i.e., when a node leaves the window, we delete it. This provides us with a stream of interspersed insertions and deletions. Moreover, setting $\ell$ to the number of nodes in the graph is equivalent to inserting all the nodes in an arbitrary order and then deleting them in the same order.  %We report results for various values of $\ell$. \\
\item We insert all the nodes of the graph in \textbf{arbitrary order}. Afterward, we delete them node-by-node by choosing a node in the current solution that has the largest neighborhood. Intuitively, we delete the elements that contribute  the most to the optimum solution; this potentially results in many changes to $\SOL_i$. We observe that even for this stream, our algorithm is efficient and makes a small number of oracle calls on average.
\end{enumerate}

Due to space constraints, we present the results of only two experiments, one for each of the types. For the first type, we present the results on the Enron dataset for a window of size $\ell = 30,000$.
%\footnote{The length of the window is equal to the number $|V|$ of nodes.}
For the second type, we present the results on the ego-Twitter dataset. Further results on other datasets and different values of $\ell$ are included in
\if\fullversion1 \cref{appendix:extra-experiments}. \else the Appendix. \fi

%For a fixed $k$, we also plot how the number of oracle calls and the quality of outputs changes over time.
%\jtodo{move to appendix if moving Figure 2}
%For Enron, at each point we perform submodular maximization on the window of $1,000,000$ email addresses that are the most recently seen on the stream. Observe that this naturally defines insertion and deletion operations; the current element on the stream is inserted, while the one seen $1,000,001$ email addresses ago is deleted. For ego-Twitter, we first insert all the nodes (profiles), and then delete node by node by choosing a node in the current solution that has the largest neighborhood.

\paragraph{The baselines.} We consider the performance of our algorithm for $\epsilon = 0.0$ and $\epsilon = 0.2$, and denote those versions by $\OurZ$ and $\OurTwo$, respectively. Recall that in our algorithm, if an $\epsilon$-fraction of elements is deleted from the solution on some level, we reconstruct the solution beginning from that level. We cannot compare against the true optimum
or the greedy solution,
as computing them is intractable for data of this size. We compare our approach with the following baselines:
\begin{enumerate}
\item The algorithms of \cite{DBLP:journals/corr/ChenNZ16} and \cite{DBLP:conf/www/EpastoLVZ17} (developed concurrently and very similar). This method is designed for the sliding window setting and can only be used if elements are deleted in the same order as they were inserted. It is parametrized by $\epsilon$ and we consider values of $\epsilon = 0.1$ and $\epsilon = 0.2$, and use $\CNZOne$ and $\CNZTwo$ to denote these two variants.
\item $\Sieve$~\cite{badanidiyuru2014streaming}, which is a streaming algorithm that only supports inserting elements. For any insertion, we simply have $\Sieve$ insert the element. For any deletion that deletes an element in the solution of $\Sieve$, we restart $\Sieve$ on the set of currently available elements.\footnote{Like our algorithm, $\Sieve$ operates parallel copies of the algorithm for different guesses of $\OPT$. We restart only those copies whose solution contains the removed element.}
\item \Random algorithm, which maintains a uniformly random subset of $k$ elements. 
\Random outputs solutions of significantly lower quality than other baselines, so due to space constraints we report its objective value results only in the appendix.
\end{enumerate}
%We present results for various values of $\epsilon$ in
%\if\fullversion1
%\cref{appendix:extra-experiments}. 
%\else
%the Appendix.
%\fi
%After any insertion or after a removal which affects the last cover produced by $\Sieve$ we execute $\Sieve$ to obtain a new solution. Recall th  

\paragraph{Results.}
The results of our evaluation are presented in \cref{figure:plots-enron}. As shown in plots (b) and (d), our approach (even for different values of $\epsilon$) is qualitatively almost the same as $\Sieve$. However, compared to $\Sieve$, our approach has a smoother increase in the number of oracle calls with respect to the increase in $k$. As a result, starting from small values of $k$, e.g., $k = 40$, our approach $\OurTwo$ requires at least $2 \times$ fewer oracle calls than $\Sieve$ to output sets of the same quality for both Enron and ego-Twitter.
The behavior of our algorithm for $\epsilon = 0.0$ is closest to $\Sieve$ in the sense that, as soon as a deletion from the current solution occurs, it performs a recomputation (see %\cref{line:S-i-ell-reduced}
Line~7 of \Deletion). For larger $\epsilon$ our approach performs a recomputation only after a number of deletions from the current solution. As a result, for $\epsilon = 0.2$, on some datasets our approach requires almost $3 \times$ fewer oracle calls to obtain a solution of the same quality as $\Sieve$ (see \cref{figure:plots-enron}(c) and (d)). 

Compared to $\CNZOne$ and $\CNZTwo$ in the context of sliding-window experiments (plots (a) and (b) in \cref{figure:plots-enron}), our approach shows very similar performance in both quality and the number of oracle calls. $\CNZTwo$ is somewhat faster than our approach (plot (a)), but it also reports a lower-quality solution (plot (b)).
We point out that $\CNZ$ fundamentally requires that insertions and deletions are performed in the same order. Hence, we could not run $\CNZ$ for plots (c) and (d), where the experiment does not have that special structure.
%The baseline $\Random$ performs significantly fewer oracle calls than any other algorithm that we run, but it also reports solution of significantly lower value. For instance, for $k = 100$ $\Random$ outputs solution of value more than $5$ times smaller than solutions output by other algorithms.
Since our approach is randomized, we repeat each of the experiments $5$ times using fresh randomness; plots show the mean values. The standard deviation of reported values for $\OurZ$ and $\OurTwo$, less than $5\%$,
is plotted in
%For the baseline $\Random$, as expected, standard deviation is larger and reaches around $30\%$ for some values of $k$. The average quality of solution of $\Random$ is several times worse than the one $\Our$, so even taking into account high standard deviation of $30\%$ an output of $\Random$ is significantly outperformed by $\Our$. Standard deviation is plotted in
\if\fullversion1
\cref{fig:steddev-plots}.
\else
the full version.
\fi

%!TEX root = 00-Dynamic Submodular Maximization.tex
%\ifarxiv\else \vspace{-0.8em} \fi
\section{Conclusion and future work}
%\ifarxiv\else \vspace{-0.5em} \fi
We present the first efficient algorithm for cardinality-constrained dynamic submodular maximization, with only poly-logarithmic amortized update time.
We complement our theoretical results with an extensive experimental analysis showing the practical performance of our solution.
Our algorithm achieves an almost $\nicefrac12$-approximation.
A compelling direction for future work is to extend the current result to more general constraints such as matroids.

\ifarxiv\else
	\small
\fi
\bibliography{cite}
\bibliographystyle{alpha}
\if\fullversion1
\ifarxiv\else
	\newpage
\fi
\normalsize 
\onecolumn
%!TEX root = 00-Dynamic Submodular Maximization.tex
\appendix 
%!TEX root = 00-Dynamic Submodular Maximization.tex 
\section{Query complexity of our algorithm} \label{app:complexity}
We now state two invariants  that are maintained by our algorithms. We will then use these invariants to analyze the oracle-query complexity of our algorithms.
\begin{invariant}\label{invariant:size-of-A-ell}
	For any $\ell$, it holds that $|A_{\ell}| \le \numtaus \cdot 2^{T - \ell}$.\footnote{Recall that $T = \log{n}$ (see 
	%\cref{line:definition-of-T}
	Line~3 of $\Init$.)}
	Here we overload notation $A_{\ell}$ to denote $\bigcup_{i=1}^{\numtaus} A_{i,\ell}$.
\end{invariant}

\begin{invariant}\label{invariant:size-of-B-ell}
	For any $\ell$, it holds that $|B_{\ell}| \le 2^{T - \ell}$.
\end{invariant}

\begin{lemma}\label{lemma:level-construct-invariants}
	If \cref{invariant:size-of-A-ell,invariant:size-of-B-ell} hold before invoking \LevelConstruct (\cref{alg:level-construct}), then they also hold after executing %\cref{lvlstop}
Line~13 of \LevelConstruct. Consequently, the invariants hold after \LevelConstruct terminates. 
\end{lemma}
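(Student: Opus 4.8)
The plan is to prove the statement by induction on the recursion depth of \LevelConstruct, tracking how the two size invariants are affected by each line of \cref{alg:level-construct}. The key point is that \cref{invariant:size-of-A-ell} is re-established \emph{from scratch} for level $\ell$ inside a single call, while for levels $\ell' > \ell$ it is re-established either by the reset on Line~13 or by the recursive call on Line~16; and \cref{invariant:size-of-B-ell} is only ever made easier to satisfy, since the only write to a $B$-set in \LevelConstruct is $B_\ell \gets \emptyset$ on Line~1.

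First I would handle \cref{invariant:size-of-B-ell}: \LevelConstruct only modifies $B_\ell$ (setting it to $\emptyset$ on Line~1) and makes recursive calls that, by induction, preserve the invariant for higher levels; no $B$-set ever grows, so the invariant is trivially maintained. Next, for \cref{invariant:size-of-A-ell} at the level $\ell$ currently being constructed: the loop on Lines~2--9 sets $A_{i,\ell}$ to either $V$ (if $\ell = 0$) or $B_{\ell-1} \cup \bigcup_{j=0}^{\numtaus} A_{j,\ell-1}$ (if $\ell > 0$), and then calls $\BucketConstruct(i,\ell)$, which by its termination condition (Line~6 of \cref{alg:bucket-construct}) leaves $|A_{i,\ell}| < 2^{T-\ell}$ unless the size-$k$ stopping condition fired. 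Here I must split into two cases. If $\BucketConstruct$ exits every iteration because $|A_{i,\ell}| < 2^{T-\ell}$, then after the loop each of the $\numtaus$ buckets on level $\ell$ has size $< 2^{T-\ell}$, so $|A_\ell| \le \numtaus \cdot 2^{T-\ell} \le (\numtaus+1)2^{T-\ell}$ — and actually I need to be slightly careful that the last $\BucketConstruct$ call could leave a bucket of size up to $2^{T-\ell}$ if it stopped due to $|\Supto{\numtaus,T}| \ge k$; this contributes at most one extra $2^{T-\ell}$ term, which is exactly why the invariant has the $(\numtaus+1)$ rather than $\numtaus$ factor. Either way $|A_\ell| \le (\numtaus+1)2^{T-\ell}$, so the invariant holds at level $\ell$.

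For levels $\ell' > \ell$ after execution of Line~13: if $|\Supto{\numtaus,T}| \ge k$, Line~13 sets all $A_{i,\ell'} \gets \emptyset$ for $\ell' > \ell$, so the invariant holds vacuously for those levels; the recursion on Line~16 is not entered in this case. If instead $|\Supto{\numtaus,T}| < k$, then $\ell < T$ and Line~16 invokes $\LevelConstruct(\ell+1)$; the invariants hold before this call (for level $\ell$ we just verified it; for levels $\ell' > \ell+1$ they held on entry and were untouched; for $\ell+1$ they held on entry and were untouched by the part of the current call before Line~16), so by the inductive hypothesis applied to the recursive call, the invariants hold after Line~13 of that call and hence (unwinding) after the current call terminates. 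The base case of the induction is $\ell = T$, where Line~16 is not entered and the argument for level $T$ above suffices. I would then note that the ``Consequently'' part follows because the only lines executed after Line~13 are the conditional recursive call (handled) and nothing else.

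The main obstacle I anticipate is the careful bookkeeping around the $\BucketConstruct$ stopping conditions: one must argue that a bucket $A_{i,\ell}$ can only be left with size $\ge 2^{T-\ell}$ when the global $|\Supto{\numtaus,T}| \ge k$ condition triggered, and that this happens for at most one bucket per level (after which the loop keeps running but the inner \textbf{if} in \cref{alg:bucket-construct} never fires again, and subsequent $A_{j,\ell}$ for $j$ beyond that index are still set by Line~4 or Line~6 of \LevelConstruct and then filtered by Line~3 of \BucketConstruct but never peeled), so that the total overshoot is a single additive $2^{T-\ell}$. Making this precise — and in particular checking that buckets constructed \emph{after} the stopping bucket within the same for-loop also satisfy the bound — is the one place where I would need to look carefully at the interplay between \cref{alg:level-construct} and \cref{alg:bucket-construct} rather than relying on a one-line argument.
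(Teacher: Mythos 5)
Your overall architecture (induction on the recursion depth, the observation that no $B$-set ever grows, and the per-bucket analysis of the freshly constructed level) matches the paper's proof, and the case in which every call to \BucketConstruct exits with $|A_{i,\ell}| < 2^{T-\ell}$ is handled identically in both. The genuine gap is exactly at the point you flagged: the case where \BucketConstruct stops because $|\Supto{\numtaus,T}| \ge k$. You assert that the stopping bucket is left with size at most $2^{T-\ell}$, that at most one bucket per level can overshoot, and hence that the total overshoot is a single additive $2^{T-\ell}$ (which you offer as the reason for the $\numtaus+1$ factor). None of this holds: the until-condition on Line~6 of \BucketConstruct is satisfied as soon as $|\Supto{\numtaus,T}| \ge k$ \emph{regardless} of $|A_{i,\ell}|$, so the stopping bucket can be left with size far above $2^{T-\ell}$ --- its only bound is the size of $B_{\ell-1} \cup \bigcup_{j} A_{j,\ell-1}$, roughly $(\numtaus+2)\,2^{T-\ell+1}$, or $|V|$ at level $0$. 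Moreover, every subsequent bucket in the for-loop of \LevelConstruct is re-initialized to that same large union, filtered once on Line~2 of \BucketConstruct, and never peeled, so each of them can also exceed $2^{T-\ell}$. Thus ``one overshooting bucket of size at most $2^{T-\ell}$'' is false, and your bound on $|A_\ell|$ does not follow; the $(\numtaus+1)$ factor in \cref{invariant:size-of-A-ell} simply counts the bucket indices of a level, not an overshoot term.

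The paper disposes of this case in a different and much simpler way: it does not attempt any size bound when the cardinality condition fires, but instead appeals to the reset performed after the for-loop (Line~11 of \LevelConstruct), which empties the relevant $A$-sets, so the invariant holds vacuously; and since in this case the condition on Line~13 is false, no recursion occurs and the argument ends there. (As written, the pseudocode's reset ranges over levels strictly above $\ell$, while the paper's proof treats the freshly built level's buckets as emptied as well --- this reading is what makes the case trivial, and it is presumably why you felt forced to bound the leftover sizes instead.) To repair your proof, replace the ``single additive overshoot'' argument by this appeal to the reset; trying to bound the sizes of the buckets left behind when $|\Supto{\numtaus,T}| \ge k$ cannot work.
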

\begin{proof}
	First, notice that the invocation of $\LevelConstruct(\ell')$ never adds new elements to any $B_{\ell}$, hence the claim holds for \cref{invariant:size-of-B-ell}. Now we prove the claim for \cref{invariant:size-of-A-ell}.
	
	Notice that $\LevelConstruct(\ell')$ only potentially increases the elements of level $\ell'$, and the only change for the rest is a reset to the empty set. Therefore we only focus on showing the invariant for  $A_{i, \ell'}$. When $\LevelConstruct(\ell')$ is invoked, it iterates over all $i = 1 \ldots \numtaus$, and for each of them invokes $\BucketConstruct$ on 
	%\cref{line:invoke-bucket-construct}
	Line~8. By the definition, $\BucketConstruct(i', \ell')$ increments $S_{i', \ell'}$ and reduces $A_{i', \ell'}$ until the size of $A_{i', \ell'}$ becomes less than $2^{T - \ell'}$ or until $|\Supto{\numtaus, T}| \ge k$ (see 
	%\cref{line:condition-bucket-construct}
	Line~6 of \BucketConstruct). After this invocation of $\BucketConstruct$ the set $A_{i', \ell'}$ is not changed anymore. There are now two cases, depending on which of the two conditions on 
	%\cref{line:condition-bucket-construct}
	Line~6 of \BucketConstruct is false.
	
	\paragraph{Case $|\Supto{\numtaus, T}| \ge k$.} In this case, each $A_{i, \ell'}$ is set to be the empty set (%
	%\cref{line:reset-all-A-i-ell}
	Line~11 of \LevelConstruct), and hence the claim follows directly.
	
	\paragraph{Case $|A_{i, \ell'}| <  2^{T-\ell'}$.} In this case, the size of $A_{i, \ell'}$ remains at most $2^{T - \ell'}$ throughout the rest of the execution. Since this holds for each $j = 0 \ldots \numtaus$ for which $|A_{j, \ell'}| <  2^{T-\ell'}$, after the loop on
	%\cref{line:level-construct-loop}
	Line~2 terminates we have that $|A_{\ell'}| \le \numtaus 2^{T - \ell'}$.
\end{proof}

\begin{lemma}
	If \cref{invariant:size-of-A-ell,invariant:size-of-B-ell} hold before invoking \Insertion (\cref{alg:insertion}), then they also hold after \Insertion terminates.
\end{lemma}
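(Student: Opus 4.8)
\noindent\emph{Proof sketch.}
The plan is to mirror the structure of the proof of \cref{lemma:level-construct-invariants}: we case-analyze on whether the conditional on Line~3 of \Insertion fires, and in the firing case we reduce to \cref{lemma:level-construct-invariants} after checking that its hypotheses hold at the point where $\LevelConstruct$ is invoked.

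First I would handle the case where no buffer overflows, i.e.\ after Line~1 there is no index $\ell$ with $|B_\ell| \ge 2^{T-\ell}$. Here \Insertion performs no further modifications, so every $A$-set is left untouched and \cref{invariant:size-of-A-ell} is inherited verbatim from before the call; and for every $\ell$ we have $|B_\ell| < 2^{T-\ell}$, which in particular implies \cref{invariant:size-of-B-ell}.

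Next, in the case where the conditional fires and $\ellstar$ is the smallest index with $|B_{\ellstar}| \ge 2^{T-\ellstar}$, I would show that both invariants hold immediately before Line~7, i.e.\ just before the call $\LevelConstruct(\ellstar)$. For \cref{invariant:size-of-A-ell} this is immediate, since Lines~1--6 of \Insertion never modify any $A$-set, so the bound assumed to hold before \Insertion still holds. For \cref{invariant:size-of-B-ell}: Line~6 sets $B_{\ell'} = \emptyset$ for every $\ell' \ge \ellstar$, so these trivially satisfy the bound; and for every $\ell' < \ellstar$ the minimality of $\ellstar$ gives $|B_{\ell'}| < 2^{T-\ell'}$, hence $|B_{\ell'}| \le 2^{T-\ell'}$. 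Thus the hypotheses of \cref{lemma:level-construct-invariants} are met at the invocation site, and that lemma yields that both invariants hold after $\LevelConstruct(\ellstar)$ returns --- which is exactly when \Insertion terminates in this case.

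The one point that needs care is that \cref{invariant:size-of-B-ell} is transiently violated: immediately after Line~1 a buffer may have size $2^{T-\ell}+1$ (it grows by at most one from its pre-insertion value). The key observation resolving this is that no downstream routine is ever called in that transient state --- Line~6 empties precisely the levels that can be in violation (every overflowing level is $\ge \ellstar$ by minimality of $\ellstar$), and by \cref{lemma:level-construct-invariants} the subsequent $\LevelConstruct$ call neither violates \cref{invariant:size-of-A-ell} nor re-introduces a violation of \cref{invariant:size-of-B-ell}. I would phrase this carefully so that the two invariants can be used as genuine inductive hypotheses across an entire sequence of operations. \qed
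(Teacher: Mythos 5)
Your proposal is correct and follows essentially the same route as the paper: split on whether Line~3 of \Insertion fires, note that the $A$-sets are untouched, verify \cref{invariant:size-of-B-ell} at the call site of $\LevelConstruct(\ellstar)$, and then invoke \cref{lemma:level-construct-invariants}. The only difference is how the bound on the $B$-sets is justified at that point: the paper rules out any buffer ever strictly exceeding its capacity via an induction over the history of \Insertion invocations, whereas you argue locally from the pre-call hypothesis, the minimality of $\ellstar$, and the emptying on Line~6 --- which suffices (and is in fact a bit more self-contained) for the lemma as stated.
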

\begin{proof}
	Observe that
	%\cref{line:insertion-add-e}
	Line~1 of \Insertion changes only sets $B_{\ell}$. Hence, if
	%\cref{line:insertion-large-B-ell}
	Line~3 evaluates to false, then the two invariants still hold. Before we analyze the case when
	%\cref{line:insertion-large-B-ell}
	Line~3 evaluates to true, we first show that there does not exist $\ell'$ such that $|B_{\ell'}| > 2^{T - \ell'}$. Observe that \Insertion is the only function that adds elements to $B_{\ell'}$.
	
	Towards a contradiction, assume that there is an invocation of \Insertion where for some $\ell'$ it holds that $|B_{\ell'}| > 2^{T - \ell'} \ge 1$. Let that be the $c$-th invocation of \Insertion. Since in each invocation of \Insertion the size of $B_{\ell'}$ increases by at most $1$, it means that in the $(c-1)$-st invocation of \Insertion it holds that $|B_{\ell'}| \ge 2^{T  - \ell'}$. Hence,
	%\cref{line:insertion-large-B-ell}
	Line~3 evaluates to true in that invocation. So, by the choice of $\ellstar$ (see
	%\cref{line:choice-of-ellstar}
	Line~4 of \Insertion) it holds that $\ellstar \le \ell'$. But this now implies that
	%\cref{line:insert-reset-all-B-ell}
	Line~6 of $\Insertion$ sets $B_{\ell'}$ to be the empty set. Hence, after
	%\cref{line:insertion-large-B-ell}
	Line~3 of \Insertion evaluates to true in the $(c - 1)$-st invocation, the size of $B_{\ell'}$ in the $c$-th invocation is at most $1$. This contradicts our assumption.
	
	This now implies that when \LevelConstruct is invoked, the two invariants hold. Hence, by \cref{lemma:level-construct-invariants} these two invariants also hold after the execution of \LevelConstruct invoked on
	%\cref{line:insertion-invokes-level-construct}
	Line~7 of \Insertion, and consequently hold after the execution of \Insertion.
\end{proof}

\begin{lemma}
	If \cref{invariant:size-of-A-ell,invariant:size-of-B-ell} hold before invoking \Deletion (\cref{alg:deletion}), then they also hold after \Deletion terminates.
\end{lemma}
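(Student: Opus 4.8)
The plan is to follow the template of the preceding lemma for \Insertion, which is in fact a little easier here because \Deletion never inserts a new element into any set $A_{i,\ell}$ or buffer $B_\ell$. First I would walk through the body of \Deletion and observe that everything before the (conditional) call to $\LevelConstruct$ either removes elements or leaves the $A$- and $B$-families untouched: \cref{line:remove-from-A,line:remove-from-B} delete $e$ from the sets $A_{i,\ell}$ and from the buffers $B_\ell$; deleting $e$ from the unique set $S_{i,\ell}$ that contains it touches neither family; and emptying the sets $S_{i',\ell'}$ on \cref{line:empty-S-deletion} likewise does not affect any $|A_{\ell'}|$ or $|B_{\ell'}|$. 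Since removing an element can only decrease $|A_\ell| = |\bigcup_{i=1}^{\numtaus} A_{i,\ell}|$ and $|B_\ell|$, both \cref{invariant:size-of-A-ell,invariant:size-of-B-ell} are preserved by all of these steps.

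It then remains to handle the single step that could enlarge some $|A_\ell|$ or $|B_\ell|$, namely the call $\LevelConstruct(\ell)$ on \cref{line:deletion-level-construct}. By the previous paragraph, nothing executed by \Deletion before this call increases $|A_{\ell'}|$ or $|B_{\ell'}|$ for any $\ell'$, so — using the hypothesis that both invariants held at the start of \Deletion — they still hold immediately before $\LevelConstruct(\ell)$ is invoked. Applying \cref{lemma:level-construct-invariants} then yields that the invariants continue to hold once $\LevelConstruct(\ell)$ terminates, hence once \Deletion terminates. If the \textbf{if} on \cref{line:S-i-ell-reduced} evaluates to \textbf{false} there is no call to $\LevelConstruct$ at all, and the claim already follows from the first paragraph.

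I do not anticipate a genuine obstacle: the argument is essentially bookkeeping. The one point that needs a moment of care is to confirm we are entitled to invoke \cref{lemma:level-construct-invariants} as a black box, so that the cascade of recursive calls $\LevelConstruct(\ell), \LevelConstruct(\ell+1), \dots$ triggered inside it does not have to be re-analyzed here; and, relatedly, to double-check that emptying the $S$-sets on \cref{line:empty-S-deletion} truly does not interact with the two invariants, which is immediate once one recalls that neither invariant mentions the $S$-family at all.
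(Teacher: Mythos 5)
Your proposal is correct and follows essentially the same route as the paper's proof: the removals on \cref{line:remove-from-A,line:remove-from-B} (and the $S$-set modifications, which the invariants do not mention) can only preserve or decrease the relevant set sizes, and the only remaining change to the $A$- and $B$-families goes through $\LevelConstruct$, which is handled by \cref{lemma:level-construct-invariants}. The paper's argument is just a terser version of exactly this bookkeeping.
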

\begin{proof}
	On
	%\cref{line:remove-from-A,line:remove-from-B}
	Lines~1 and~2 \Deletion removes some elements from $A_{i, \ell}$ and $B_{\ell}$. So, these steps maintain \cref{invariant:size-of-A-ell,invariant:size-of-B-ell}. The rest of the changes of the sets $A_{i, \ell}$ and $B_{\ell}$ is done through invocation of \LevelConstruct on
	%\cref{line:deletion-level-construct}
	Line~9. The proof now follows by \cref{lemma:level-construct-invariants}.
\end{proof}

\subsection{Oracle-query Complexity of Recomputing a Level}

We begin the analysis by bounding the expected cost of recomputing a level $\ell$ (which entails also recomputing further levels $\ell+1, ...$).

\begin{lemma}[\LevelConstruct Complexity]\label{lemma:running-time-level-construct}
	$\LevelConstruct(\ell)$ performs $O(R^2 \log n \cdot 2^{T-\ell} / \epsilon_p^2)$ oracle queries in expectation, not counting the possible subsequent call to $\LevelConstruct(\ell+1)$.
\end{lemma}
\begin{proof}
    Note that all the oracle queries performed by $\LevelConstruct$ are via invocations of $\BucketConstruct$. Hence, we first analyze the oracle-query complexity of $\BucketConstruct$. We begin by bounding $|A_{i, \ell}|$ during an execution of $\LevelConstruct(\ell)$.
	
	By the invariants, from
	%\cref{line:previous-elements-l1,line:previous-elements-l0}
	Lines~4 and~6 of \LevelConstruct we have that for each $i$ and if $R \ge 2$ it holds that
	\begin{eqnarray}
	|A_{i, \ell}| & \le & |B_{\ell - 1}| + \sum_{j=1}^{R} |A_{j, \ell-1}| \nonumber \\
	& \le & 2^{T - \ell + 1} + R \cdot 2^{T - \ell + 1} \nonumber \\
	& \le & 3 \cdot R \cdot 2^{T - \ell}. \label{eq:bound-on-A-i-ell}
	\end{eqnarray}
        For each $i$, $\LevelConstruct$ on
	Line~8 invokes $\BucketConstruct(i,\ell)$.
	
	\paragraph{Oracle queries of $\BucketConstruct(i, \ell)$.} 
	Line~2 of \BucketConstruct performs at most $|A_{i, \ell}|$ oracle queries.
	Next, \Peeling is (possibly) called, which begins with a call to \EstimateMean.
	For any $a \ge 0$,
	let us denote by $X_a$ the worst-case expected number of oracle queries until the end of the execution of \BucketConstruct, assuming that the algorithm is currently at the beginning of \Peeling
	and
	that the size of the current candidate set is $a$.
	Note that by the current candidate set we mean $N$ if the algorithm is inside \Peeling, and $A_{i,\ell}$ otherwise.
	Our main objective is to bound $X_a$:
	\newcommand{\hep}{\hat{\epsilon_p}}
	\begin{claim} \label{claim:Xa}
	    $X_a \le Q \cdot a$, where $Q := \frac{66 \log(n/\hep)}{\hep^2}$.
	\end{claim}
	\noindent
	We prove \cref{claim:Xa} by induction on $a$.
	To that end, we first make several observations.
	\begin{itemize}
	    \item Call an element of the current candidate set \emph{strong} if its marginal contribution to the current solution is at least $\tau_i$.
	    \item We begin with a call to \EstimateMean. By \cref{lem:estimator}, \EstimateMean makes at most $33 \log(n / \hep) / \hep^2 = Q/2$ oracle queries, and with probability at least $1-\delta$ answers correctly, that is: if the fraction of strong elements in the current candidate set is at least $1-\hep$, then it returns false, and if the fraction is at most $1-2\hat{\epsilon_p}$, then it returns true.
	    \item Exactly one of three events happens:
	    \begin{enumerate}
	        \item fraction of strong elements $\ge 1 - 2 \hep$ and \EstimateMean returns false,
	        \item fraction of strong elements $\le 1 - \hep$ and \EstimateMean returns true,
	        \item fraction of strong elements $> 1 - \hep$ and \EstimateMean returns true, or fraction of strong elements $< 1 - 2 \hep$ and \EstimateMean returns false.
	    \end{enumerate}
	    \item In case of event 1, the algorithm proceeds to sample an element from the current candidate set.
	    With probability at least $1-2\hep$, the sampled element is strong and thus it is picked; thus $a$ is decremented. Otherwise we go back to the state with $a$ candidate elements.
	    \item In case of event 2, the algorithm exits \Peeling and filters $A_{i,\ell}$, removing at least an $\hep$ fraction of its elements. Filtering costs $a$ oracle queries. We continue with at most $\lfloor (1-\hep)a \rfloor$ elements.
	    \item In case of event 3, which has probability at most $\delta = \frac{\hep}{n^2}$,
	    the worst case is that we filter $A_{i,\ell}$ and pick no element, going back to the state with $a$ candidate elements.
	    \item Above we are ignoring the fact that \BucketConstruct/\Peeling may exit if the solution becomes large or the candidate set becomes small. We can ignore this event, as it would be favorable for our analysis (there would be no more oracle queries).
	\end{itemize}
	This gives the following recurrence:
	\begin{alignat*}{2}
	    X_a &\le Q/2 &&+ \prob{\text{event 1}} \cdot \left( (1-2\hep) X_{a-1} + 2 \hep \cdot X_a \right) \\ & &&+ \prob{\text{event 2}} \cdot \left( a + X_{\lfloor (1-\hep)a \rfloor} \right) \\ & &&+ \delta \cdot \left( a + X_a \right)
	    \\
	    &\le Q/2 &&+ \delta a + \delta X_a + \max \left( (1-2\hep) X_{a-1} + 2 \hep \cdot X_a, a + X_{\lfloor (1-\hep)a \rfloor} \right)
	    %\,,
	\end{alignat*}
	which leads to two cases depending on the maximum. In the first case:
	\begin{align*}
	    X_a \le Q/2 + \delta a + \delta X_a + (1-2\hep) X_{a-1} + 2 \hep \cdot X_a %\,,
	\end{align*}
	from which, using the induction hypothesis and rearranging we get
	\begin{align*}
	    (1 - 2 \hep - \delta) X_a &\le Q/2 + \delta a + (1-2\hep) Q (a-1) \\
	    &= Q/2 + \delta a + (1-2\hep) Q a - (1-2\hep) Q \\
	    &= Q/2 + \delta a + \delta Q a + (1-2\hep-\delta) Q a - (1-2\hep) Q \\
	    &= -(1/2 - 2\hep) Q + (Q+1) a \frac{\hep}{n^2} + (1-2\hep-\delta) Q a \\
	    &\le (1-2\hep-\delta) Q a
	\end{align*}
	that is, $X_a \le Q a$ as required.
	(Recall that $\delta$ is defined on Line~1 of \EstimateMean (\cref{alg:estimate_mean}) to be $\hat{\epsilon}_p / n^2$.)
	In the second case, similarly subtracting $\delta X_a$ from both sides,
	\begin{align*}
	    (1-\delta) X_a &\le Q/2 + \delta a + a + X_{\lfloor (1-\hep)a \rfloor} \\
	    &\le Q/2 + \delta a + a + Q \cdot \lfloor (1-\hep)a \rfloor \\
	    &\le Q/2 + \delta a + a + Q \cdot (a - \max(1, \hep a)) \\
	    &\le Q/2 + \delta a + a + \delta Q a - \delta Q a + Q a - Q/2 - Q \hep a / 2 \\
	    &= (1 - \delta) Q a + a \cdot (\delta + 1 + \delta Q - Q \hep / 2) \\
	    &\le (1 - \delta) Q a \,.
	\end{align*}
	This concludes the proof of \cref{claim:Xa}.
	Together with \eqref{eq:bound-on-A-i-ell},
	it implies that the total expected number of oracle queries in one $\BucketConstruct(i,\ell)$ call is at most
	$(Q+1) \cdot 3 \cdot R \cdot 2^{T-\ell} \le O\left( R \cdot 2^{T-\ell} \cdot \log(n/\epsilon_p) / \epsilon_p^2 \right)$.
	Summing this up over all $i=1,...,R$ contributes another $R$ factor. Finally, we can bound $\log(n/\epsilon_p) \le O(\log n)$ using that $\epsilon_p \ge 1/n$.
\end{proof}

\subsection{Frequency of Recomputing a Level}

What remains now is to bound the frequency of level recomputations caused by insertions and deletions.
By a recomputation of level $\ell$ we will refer to Lines 1--12 of $\LevelConstruct$,
that is, everything that happens on level $\ell$ before the recursive call to $\LevelConstruct(\ell+1)$.
Before we proceed to the formal analysis, we explain the main intuitions below.
\label{discussion}
\begin{itemize}
    \item As a recomputation of level $\ell$ costs $\mathrm{polylog}(n) \cdot 2^{T-\ell}$ in expectation,
    showing that it happens at most roughly every $2^{T-\ell}$ operations
    will yield a polylogarithmic amortized complexity per operation.
    \item \textbf{Insertions.} Level $\ell$ can be recomputed due to an insertion if the buffer set $B_\ell$ reaches size $2^{T-\ell}$. At that point $B_\ell$ is emptied, and only grows by at most one element per insertion. Thus it takes $2^{T-\ell}$ insertions to trigger one recomputation.
    \item \textbf{Deletions.} When level $\ell$ is recomputed, robust solution sets $S_{i,\ell}$ are built for all $i$.
    A future deletion from $S_{i,\ell}$ triggers another recomputation
    when $S_{i,\ell}$ has lost at least an $\epsilon$-fraction of its elements.
    However,
    each element in $S_{i,\ell}$ was obtained by uniform sampling from a large set (of size roughly at least $2^{T-\ell}$).
    It is unlikely that an $\epsilon$-fraction of thus-selected elements was deleted
    (by an oblivious adversary)
    unless a similarly large fraction of that large set was deleted.
    Thus, causing a recomputation due to a deletion requires many deletions on average.
    More precisely,
    in \cref{lemma:probably-robust}
    we prove that the probability of a small number of deletions (since the last recomputation) causing $S_{i,\ell}$ to diminish (enough to recompute level $\ell$) is at most $\frac{1}{4R}$.
    \item The above analysis does not yet take into account the interaction between recomputations.
    For example, when $\LevelConstruct(3)$ is invoked,
    level $4$ is recomputed as well.
    This introduces difficulties for the above analysis for $\ell=4$:
    even though the $\LevelConstruct(3)$ call can pay for the current recomputations of all further levels,
    it also causes their solution sets to be resampled.
    Looking from the perspective of $\ell=4$, imagine that such recomputations-from-above were adversarial:
    then an ``adversary'' could cause level $4$ to be resampled at every step until a bad sample is taken -- i.e., one that will make level $4$ be recomputed ``internally'' (due to a deletion on that level) after few steps -- thus making the level-$4$ solution so obtained less robust. And that next recomputation (after few steps) would need to be paid for by level~$4$.
    
    To get around this issue,
    in the proof of \cref{theorem:oracle-complexity} 
    we use an accounting scheme where the level that incurs the initial $\LevelConstruct$ call (whether due to an insertion or a deletion) will pay for the recomputations of all further levels, as well as a constant factor more that goes towards their budgets. In this way, the supposed adversary is weakened, in that trying to artificially force early recomputations on level $4$ as in the example above is now prohibitively expensive.\footnote{Another observation that could help here, which we do not use, is that in our algorithm, further levels clearly depend on earlier levels, but not vice versa.
    Together with the fact that the sequence of operations is fixed in advance (i.e., the adversary is oblivious),
    this means that,
    at the time that a level $\ell$ is recomputed,
    the time at which the next recomputation-from-above will occur is actually already known (not random).
    However, our analysis in \cref{theorem:oracle-complexity} is valid even against an adversary that controls the recomputations-from-above adaptively, knowing the solution that was sampled at level $\ell$ -- as long as causing a recomputation-from-above comes at a cost to the adversary.}
\end{itemize}

We now proceed to the formal statements and proofs.

In the following lemma, which is the crux of our analysis,
we study the deletion-robustness of a bucket $S_{i,\ell}$
in isolation (ignoring any possible recomputations caused by insertions or by other buckets or levels).

\newcommand{\definitionofr}{\ensuremath{r = \frac{\epsilon }{11 R} \cdot 2^{T-\ell}}}
\begin{lemma} \label{lemma:probably-robust}
    Consider a bucket $S_{i,\ell}$ just after being rebuilt with a call to $\BucketConstruct(i,\ell)$.
    Let $e_1, e_2, ..., e_r$ be the sequence of the next $r$ elements deleted beginning from that time,
    where \[ \definitionofr . \]
    (Note that this sequence is not random as we assume an oblivious adversary, i.e., the input is fixed before the algorithm has started.)
    Then \[ \prob{|S_{i,\ell} \cap \{e_1, ..., e_r\}| > \epsilon \cdot |S_{i,\ell}|} \le \frac{1}{4R}, \]
    where the probability is over the randomness used to construct $S_{i,\ell}$.
\end{lemma}
Our proof of \cref{lemma:probably-robust} is a probabilistic coupling argument
in which we relate the process of building $S_{i,\ell}$
to another process that we call the ballot game,
which was first studied in the 19th century~\cite{whitworth,bertrand1887solution}.
It models an election between two candidates where all votes are counted in random order,
and studies the probability that the winner is always ahead (by a given factor) during the count.

\textbf{Ballot game:}
\begin{itemize}
    \item start with $r$ red elements and $2^{T-\ell}(1-2\hat{\epsilon_p}) - r$ non-red elements,
    \item while there are elements remaining, pick one at random,
    \item the ballot game \emph{fails} if at any point the fraction of red elements picked exceeds $\epsilon$.
\end{itemize}

Intuitively, the ballot game is very close to what we want to show in \cref{lemma:probably-robust}, i.e., that the fraction of elements that will be deleted in the next $r$ operations is always below $\epsilon$ during the execution of the algorithm. We begin by listing certain useful properties of the ballot game; then, we formalize the coupling in the proof of \cref{lemma:probably-robust}.

The following follows from known results:
\begin{lemma} \label{lem:ballot-game-fail}
    The probability that the ballot game fails is at most $\frac{1}{5R}$.
\end{lemma}
\begin{proof}
    It is known~\cite{aeppli1924theorie,bartonmallows}
    that in a ballot game with $a_1 = 2^{T-\ell}(1-2\hat{\epsilon_p}) - r$ non-red and $a_2 = r$ red elements,
    the probability that
    the number of counted non-red ballots is always more than
    $\mu = \ceil{1/\epsilon} - 1$ times more
    than the number of  counted red ballots is 
    \[
        \frac{a_1 - \mu a_2}{a_1 + a_2} = \frac{2^{T-\ell}(1-2\hat{\epsilon_p}) - r - (\ceil{1/\epsilon} - 1)r}{2^{T-\ell}(1-2\hat{\epsilon_p})} = 1 - \frac{r \ceil{1/\epsilon}}{2^{T-\ell}(1-2\hat{\epsilon_p})} ,
    \]
    thus
    \begin{align*}
        \prob{\text{(at any time) } \text{red} > \epsilon \cdot \text{total}}
        &=
        \prob{\text{(at any time) } \text{red} \cdot \left(1/\epsilon - 1\right) > \text{non-red}} \\
        &\le
        \prob{\text{(at any time) } \text{red} \cdot \mu \ge \text{non-red}} \\
        &=
        \frac{r \ceil{1/\epsilon}}{2^{T-\ell}(1-2\hat{\epsilon_p})}
        \le
        \frac{r \cdot 2/\epsilon}{2^{T-\ell}(1-2\hat{\epsilon_p})}
        =
        \frac{\epsilon \cdot 2^{T-\ell} \cdot 2/\epsilon}{11 R \cdot 2^{T-\ell}(1-2\hat{\epsilon_p})}
        \le
        \frac{1}{5R} .
    \end{align*}
\end{proof}
We will also need the following intuitive fact:
\begin{fact} \label{fact:red-is-bad}
    Condition on any state of the ballot game. The probability of failure conditioned on picking a red element in the next move is no lower than if a non-red element is picked.
\end{fact}
\begin{proof}[Proof of \cref{lemma:probably-robust}]
    Let us call the elements $e_1, e_2, ..., e_r$ \emph{red}.
    We recall how the bucket $S_{i,\ell}$ is constructed.
    First, it is emptied of old contents (either in Line~8 of $\Deletion$ or in Line~5 of $\Insertion$).
    We start with a set $A_{i,\ell}$, which has size at least $2^{T-\ell}$
    (if not, then $S_{i,\ell}$ remains empty and thus infinitely robust to deletions).
    From this state, we perform some number of \textit{iterations}.
    We say that an iteration concludes when one of the following four events happens:
    \newcommand{\Drop}{\ensuremath{\mathrm{Drop}}}
    \newcommand{\Exit}{\ensuremath{\mathrm{Exit}}}
    \newcommand{\PickNonRed}{\textrm{PickNonRed}}
    \newcommand{\PickRed}{\textrm{PickRed}}
    \begin{enumerate}
        \item \Exit: $\BucketConstruct$ terminates.
        \item \Drop: the set $A_{i,\ell}$ is filtered on Line~2 of $\BucketConstruct$, and it loses at least one element but retains at least $2^{T-\ell}$ many.
        \item \PickRed: a red element is added in $\Peeling$.
        \item \PickNonRed: a non-red element is added in $\Peeling$.
    \end{enumerate}
    Once an iteration concludes (with the earliest of the above four events), a new iteration begins (unless Exit happened).
    
    Because the event Exit could happen at any time, in effect we need to upper-bound the probability that after any iteration,
    the set of currently picked elements has more than an $\epsilon$ fraction of red elements.
    We will say that the algorithm \emph{fails} if this happens.
    
    We upper-bound the probability of failure by coupling the $\BucketConstruct$ execution to an instance of the ballot game. Intuitively, we do so as follows:
    \begin{itemize}
        \item if \Exit: the algorithm cannot fail anymore, and we stop;
        \item if \Drop: the ballot game does nothing;
        \item if \PickRed: the ballot game picks a red element;
        \item if \PickNonRed: the ballot game picks a non-red element.
    \end{itemize}
    We will use \cref{fact:red-is-bad} together with the fact that the algorithm picks a red element with lower probability than the ballot game.
    
    \newcommand{\Aptotal}{A^{\mathrm{picked}}_{\mathrm{total}}}
    \newcommand{\Artotal}{A^{\mathrm{rem}}_{\mathrm{total}}}
    \newcommand{\Apred}{A^{\mathrm{picked}}_{\mathrm{red}}}
    \newcommand{\Arred}{A^{\mathrm{rem}}_{\mathrm{red}}}
    \newcommand{\Bptotal}{B^{\mathrm{picked}}_{\mathrm{total}}}
    \newcommand{\Brtotal}{B^{\mathrm{rem}}_{\mathrm{total}}}
    \newcommand{\Bpred}{B^{\mathrm{picked}}_{\mathrm{red}}}
    \newcommand{\Brred}{B^{\mathrm{rem}}_{\mathrm{red}}}
    Formally, we let the state of the algorithm be described by a tuple of nonnegative integers $(\Aptotal,\Artotal,\Apred,\Arred)$, and similarly a tuple $(\Bptotal,\Brtotal,\Bpred,\Brred)$ describes the state of the ballot game (here $\mathrm{rem}$ stands for "remaining"). We will maintain the following invariants:
    \begin{align}
        \Arred &\le \Brred, \label{inv1} \\
        (1-2\hat{\epsilon_p}) \Artotal &\ge \Brtotal, \label{inv2} \\
        \text{the algorithm has } &\text{not failed yet}, \label{not-fail} \\
        \Aptotal &= \Bptotal, \label{inv3} \\
        \Apred &= \Bpred, \label{inv4} \\
        \Brtotal &= (1-2\hat{\epsilon_p})2^{T-\ell} - \Bptotal, \label{inv3p} \\
        \Brred &= r - \Bpred. \label{inv4p}
    \end{align}
    % one could also add that \Arred \le \Artotal and same for B
    The invariants \eqref{inv3}--\eqref{inv4} are easily seen to be always satisfied, because the algorithm and the ballot game pick elements at the same (both red or both non-red); \eqref{inv3p}--\eqref{inv4p} follow since the ballot game never throws elements away.
    At the beginning, \eqref{inv1} is satisfied because $A_{i,\ell}$ contains at most $r$ red elements (so $\Arred \le r = \Brred$), and \eqref{inv2} follows since $(1-2\hat{\epsilon_p})\Artotal = (1-2\hat{\epsilon_p})|A_{i,\ell}| \ge (1-2\hat{\epsilon_p})2^{T-\ell} = \Brtotal$. We will verify that \eqref{inv1}--\eqref{not-fail} are maintained after every iteration.
    
    \newcommand{\Afail}{A_{\textrm{fail}}}
    \newcommand{\Bfail}{B_{\textrm{fail}}}
    Denote the event of the algorithm failing by $\Afail$
    and
    the event of the ballot game failing by $\Bfail$.
    Roughly, we want to show that $\Afail$ is less likely that $\Bfail$,
    but we need to allow some small slack for the possibility of $\EstimateMean$ returning incorrect answers.
    Formally, our goal will be to prove the following:
    \begin{claim} \label{claim:main}
        Consider a state of the algorithm and a state of the ballot game that satisfy the invariants.
        Then \[ \prob{\Afail} \le \prob{\Bfail} + \delta \cdot \Artotal \,. \]
    \end{claim}
    \noindent
    Since the initial states satisfy the invariants,
    \cref{claim:main} and \cref{lem:ballot-game-fail}
    will imply that
    \[ \prob{\frac{|S_{i,\ell} \cap \{e_1, ..., e_r\}|}{|S_{i,\ell}|} > \epsilon} \le \prob{\Afail} \le \prob{\Bfail} + \delta \cdot n \le \frac{1}{5R} + \frac{\hat{\epsilon_p}}{n} \le \frac{1}{4R} \]
    which concludes the proof of \cref{lemma:probably-robust}; as a reminder, $\delta$ is defined on Line~1 of \EstimateMean (\cref{alg:estimate_mean}) to be $\hat{\epsilon}_p / n^2$.
    
    We will prove the Claim by induction on $\Artotal$.
    We start by taking care of a few special cases:
    \begin{enumerate}
        \item If $\Arred = 0$, then the algorithm cannot fail anymore and the Claim is satisfied.
        \item The former point is implied if $\Artotal = 0$ (induction base case).
        \item If the ballot game has only red elements remaining, then the algorithm also cannot fail anymore, because it has already picked a large number of elements. Namely,
        \[
            (1-2\hat{\epsilon_p})2^{T-\ell} - \Aptotal
            \stackrel{\eqref{inv3}}{=}
            (1-2\hat{\epsilon_p})2^{T-\ell} - \Bptotal
            \stackrel{\eqref{inv3p}}{=}
            \Brtotal
            =
            \Brred
            \stackrel{\eqref{inv4p}}{=}
            r - \Bpred
            \stackrel{\eqref{inv4}}{=}
            r - \Apred
            \,,
        \]
        thus
        \[
            \Aptotal = (1-2\hat{\epsilon_p})2^{T-\ell} - r + \Apred
            \ge (1-2\hat{\epsilon_p})2^{T-\ell} - r
        \]
        and even if the algorithm now picked all remaining red elements, the red fraction would be at most
        \[
            \frac{r}{\Aptotal} \le \frac{r}{(1-2\hat{\epsilon_p})2^{T-\ell} - r} \ll \epsilon \,.
        \]
    \end{enumerate}
    
    \paragraph{Inductive step.} Now we make the inductive step, assuming none of the special cases hold.

    Let us first note that Exit is a very good event, since the algorithm cannot fail anymore:
    \begin{equation}
    \prob{\Afail \mid \Exit} = 0 \,.
        \label{eq:exit-fail2}
    \end{equation}
    
    What about \Drop?
    If \Drop, let $(\Aptotal,\tilde{\Artotal},\Apred,\tilde{\Arred})$ be the new algorithm state after this iteration
    (note that the algorithm did not pick any element in this iteration).
    Then the states $(\Aptotal,\tilde{\Artotal},\Apred,\tilde{\Arred})$ and $(\Bptotal,\Brtotal,\Bpred,\Brred)$
    still satisfy all the invariants.
    We verify \eqref{inv1}: $\tilde{\Arred} \le \Arred \le \Brred$;
    and \eqref{inv2}: $(1-2\hat{\epsilon_p})\tilde{\Artotal} \ge (1-2\hat{\epsilon_p}) 2^{T-\ell} \ge \Brtotal$
    (the last inequality is because $\Brtotal$ never increases).
    Since $\tilde{\Artotal} < \Artotal$,
    the inductive hypothesis implies
    \begin{equation}
       \prob{\Afail \mid \Drop} \le \prob{\Bfail} + \delta \cdot \tilde{\Artotal} \le \prob{\Bfail} + \delta \cdot (\Artotal - 1) \,. \label{eq:drop-fail2}
    \end{equation}
    
    Having \eqref{eq:exit-fail2} and \eqref{eq:drop-fail2},
    let us trace what happens during the iteration.
    First, if the previous iteration ended with picking an element, it could be that now the solution has $k$ elements;
    in that case we immediately Exit and the algorithm cannot fail anymore.
    Otherwise, regardless of how the previous iteration (if any) ended,
    the algorithm will proceed to Line~4 of $\Peeling$,
    where $\EstimateMean$ is called.
    
    Let us say that an element (of the set that the algorithm is picking from, i.e., $N$ in $\Peeling$ or $A_{i,\ell}$ if it is currently outside of $\Peeling$) is \emph{strong} if its marginal contribution to the current solution is at least $\tau_i$.
    We have two main cases depending on the fraction of strong elements.
    
    \textbf{Weak case: fraction of strong elements is less than $1-2\hat{\epsilon_p}$.}
    
    Consider the aforementioned first call to $\EstimateMean$ during this iteration.
    By \cref{lem:estimator}, with probability at least $1-\delta$,
    $\EstimateMean$ returns true.
    If it does, then the algorithm will exit $\Peeling$ and return to Line~2 of $\BucketConstruct$,
    where all non-strong elements (at least one element) will be filtered away.
    Thus we have Drop or Exit (depending on whether there are at least $2^{T-\ell}$ strong elements or not).
    By \eqref{eq:exit-fail2} and \eqref{eq:drop-fail2}, either outcome is good.
    
    To sum up, in the weak case \cref{claim:main} holds since
    \begin{align*}
        \prob{\Afail} &\le \prob{\text{first \EstimateMean returns false}} + \prob{\Afail \mid \text{first \EstimateMean returns true}} \\
        &\le \delta + \prob{\Afail \mid \Drop \text{ or } \Exit} \\
        &\le \prob{\Bfail} + \delta \cdot \Artotal \,.
    \end{align*}
    
    \textbf{Strong case: fraction of strong elements is at least $1-2\hat{\epsilon_p}$.}
    
    As in the previous case, by \eqref{eq:exit-fail2} and \eqref{eq:drop-fail2}, the Drop and Exit outcomes are good.
    Here, we argue that picking an element is also a good outcome.
    Namely, we will show that
    \begin{align}
        \prob{\Afail \mid \PickRed \text{ or } \PickNonRed} \le \prob{\Bfail} + \delta \cdot \Artotal \,. \label{pick-red-or-not-red-fail}
    \end{align}
    Showing \eqref{pick-red-or-not-red-fail} will conclude the proof of \cref{claim:main}.
    
    Note that in \Peeling we always sample an element from the current set
    uniformly at random.
    If it is strong, it is picked (and the iteration ends);
    if not, the algorithm continues.
    Since we are conditioning on $\PickRed \text{ or } \PickNonRed$,
    eventually a strong element will be picked.
    Crucially, this produces \emph{a uniform sample from the set of strong elements}.
    (Note that here we are not using any guarantees about \EstimateMean \ -- it might as well return arbitrary answers.)
    %\footnote{In fact, if \EstimateMean ever returned true, we would have \Drop or \Exit, depending on whether there are at least $2^{T-\ell}$ strong elements or not -- unless \textit{all} elements are strong, in which case we would continue. Though in that case, \EstimateMean can never return true anyway.})
    By this, and since strong elements are at least a $1-2\hat{\epsilon_p}$ fraction, we have
    \begin{equation}
        \prob{\PickRed \mid \PickRed \text{ or } \PickNonRed}
        \le
        \frac{\Arred}{(1-2\hat{\epsilon_p})\Artotal} \,. \label{eq:red-is-small}
    \end{equation}
    We continue by writing
    \begin{align}
        &\prob{\Afail \mid \PickRed \text{ or } \PickNonRed} %\label{eq:total-expectation}
        \nonumber \\ &= 
        \prob{\PickRed \mid \PickRed \text{ or } \PickNonRed} \prob{\Afail \mid \PickRed} \nonumber \\ &+
        \prob{\PickNonRed \mid \PickRed \text{ or } \PickNonRed} \prob{\Afail \mid \PickNonRed} \nonumber
        \,.
    \end{align}
    Consider picking a red element; we want to show
    \begin{equation} \label{eq:pick-red}
        \prob{\Afail \mid \PickRed} \le \prob{\Bfail \mid \text{ballot picks red}}  + \delta \cdot \Artotal \,.
    \end{equation}
    Consider the state of the algorithm and the ballot game if both pick a red element\footnote{The ballot game state is valid since $\Brred \ge \Arred$ by \eqref{inv1}, and $\Arred > 0$ since we are not in special case 1.}.
    We have two cases.
    If the algorithm fails, then the ballot game also fails (by \eqref{inv3},\eqref{inv4} they have the same red fractions),
    so both probabilities in \eqref{eq:pick-red} are 1.
    Otherwise,
    they continue to satisfy all invariants
    (in \eqref{inv1} both sides decrease by 1, and in \eqref{inv2} the right-hand side decreases more than the left-hand-side),
    and we get \eqref{eq:pick-red} by the inductive hypothesis (with a slack of $\delta$).
    
    Consider picking non-red elements\footnote{The ballot game state is valid since we are not in special case 3.} similarly (albeit without the possibility of failure).
    We get
    \begin{equation} \label{eq:pick-non-red}
        \prob{\Afail \mid \PickNonRed} \le \prob{\Bfail \mid \text{ballot picks non-red}}  + \delta \cdot \Artotal \,.
    \end{equation}
    Together we have
    \begin{align}
        &\prob{\Afail \mid \PickRed \text{ or } \PickNonRed}  - \delta \cdot \Artotal \nonumber \\ &\le 
        \prob{\PickRed \mid \PickRed \text{ or } \PickNonRed} \prob{\Bfail \mid \text{ballot picks red}} \nonumber \\ & \ \ \ +
        (1 - \prob{\PickRed \mid \PickRed \text{ or } \PickNonRed}) \prob{\Bfail \mid \text{ballot picks non-red}} \label{eq:conv-comb1}
        \,,
    \end{align}
    which we want to upper-bound by
    \begin{align}
        \prob{\Bfail} &=
        \prob{\text{ballot picks red}} \prob{\Bfail \mid \text{ballot picks red}} \nonumber \\ & \ \ \ +
        (1 - \prob{\text{ballot picks red}}) \prob{\Bfail \mid \text{ballot picks non-red}} \label{eq:conv-comb2}
        .
    \end{align}
    By \cref{fact:red-is-bad} we have \[ \prob{\Bfail \mid \text{ballot picks red}} \ge \prob{\Bfail \mid \text{ballot picks non-red}} \,. \]
    Now, \eqref{pick-red-or-not-red-fail} follows since
    \begin{equation*}
        \prob{\PickRed \mid \PickRed \text{ or } \PickNonRed}
        \stackrel{\eqref{eq:red-is-small}}{\le}
        \frac{\Arred}{(1-2\hat{\epsilon_p})\Artotal}
        \stackrel{\eqref{inv1},\eqref{inv2}}{\le}
        \frac{\Brred}{\Brtotal}
        =
        \prob{\text{ballot picks red}}
    \end{equation*}
    (because we have two convex combinations \eqref{eq:conv-comb1},\eqref{eq:conv-comb2} of terms $\prob{\Bfail \mid \text{ballot picks red}}$ and
    \\
    $\prob{\Bfail \mid \text{ballot picks non-red}}$, and \eqref{eq:conv-comb1} has a smaller coefficient than \eqref{eq:conv-comb2} does in front of the larger term).
    This concludes the proof of \cref{claim:main} and \cref{lemma:probably-robust}.
\end{proof}

Recall that all buckets on a given level are recomputed at the same time (in a \LevelConstruct call).
By union-bounding over them,
we get the following corollary of \cref{lemma:probably-robust}.

\begin{lemma} \label{lemma:probably-robust2}
    Consider a level $\ell$ just after being rebuilt with a call to $\LevelConstruct(\ell)$.
    Let $e_1, e_2, ..., e_r$ be the sequence of the next $r$ elements deleted beginning from that time,
    where \[ \definitionofr . \]
    (Note that this sequence is not random as we assume an oblivious adversary, i.e., the input is fixed before the algorithm has started.)
    Then \[ \prob{(\exists i) \quad |S_{i,\ell} \cap \{e_1, ..., e_r\}| > \epsilon \cdot |S_{i,\ell}|} \le \frac{1}{4} \]
    (where the probability is over the randomness used to construct $(S_{i,\ell})_{i=1,...,R}$).
\end{lemma}

Now we are ready to analyze the amortized complexity of our algorithm, i.e., prove \cref{theorem:oracle-complexity}.
We proceed using the strategy and accounting scheme introduced on page~\pageref{discussion}.

\theoremoraclecomplexity*
\begin{proof}
    \newcommand{\Cost}{X}
    Let $\Cost$ be such that recomputing level $\ell$
    (without counting the subsequent recursive call to $\LevelConstruct(\ell+1)$)
    has expected cost at most $\Cost \cdot 2^{T-\ell}$.
    By \cref{lemma:running-time-level-construct}, we can have $\Cost = O(R^2 \log n / \epsilon_p^2)$.
    
    We now describe our charging scheme.
    When a recomputation event happens,
    the responsible layer will make a payment,
    which is used to cover the recomputation cost as well as pay benefits towards the budgets of other layers (or the same layer).
    We will show that for each layer, the sum of payments made minus the sum of benefits received over the course of the algorithm is small.
    
    The accounting scheme is as follows:
    \begin{itemize}
        \item When \Insertion or \Deletion invokes $\LevelConstruct(\ell)$:
        \begin{itemize}
            \item for $\ell' = \ell, ..., T$: \quad level $\ell$ pays for the cost of recomputing level $\ell'$,
            \item for $\ell' = \ell, ..., T$: \quad level $\ell$ pays $X \cdot 2^{T-\ell'}$ as benefit to level $\ell'$.
        \end{itemize}
        Note that the expected payment (from level $\ell$) is at most $4 \cdot X \cdot 2^{T-\ell}$ by \cref{lemma:running-time-level-construct}.
    \end{itemize}
    Note that if a level $\ell'$ is recomputed 
    due to an insertion or deletion on some level $\ell \le \ell'$,
    then the costs of this recomputation are paid by level $\ell$,
    and additionally level $\ell'$ receives a benefit of $X \cdot 2^{T-\ell'}$.
    
    For every timestep $t$ we have
    \[
        \sum_\ell \textrm{Payment}_{t,\ell} = \sum_\ell \textrm{Cost}_{t,\ell} + \sum_\ell \textrm{ReceivedBenefits}_{t,\ell} \,,
    \]
    thus
    \[
        \sum_{t=1}^n \sum_\ell \textrm{Cost}_{t,\ell} = \sum_\ell \left[ \sum_{t=1}^n \textrm{Payment}_{t,\ell} - \textrm{ReceivedBenefits}_{t,\ell} \right].
    \]
    \cref{theorem:oracle-complexity} will follow if we show that for each level $\ell$,
    \begin{equation} \label{eq:payment-benefits}
        \ee{\sum_{t=1}^n \textrm{Payment}_{t,\ell} - \textrm{ReceivedBenefits}_{t,\ell}} \le O(RXn/\epsilon) = O(R^3 n \log n / (\epsilon \epsilon_p^2))
    \end{equation}
    (as then it only remains to sum over levels $\ell=0,...,T$; recall that $T = \log n$).
    
    Let us fix $\ell$.
    Define $\textrm{InsertionPayment}_{t,\ell}$ (resp.~$\textrm{DeletionPayment}_{t,\ell}$) as the payment from level $\ell$ if there is an insertion (resp.~deletion) at timestep $t$ and level $\ell$.
    We have $\textrm{Payment}_{t,\ell} = \textrm{InsertionPayment}_{t,\ell} + \textrm{DeletionPayment}_{t,\ell}$ (and at most one of the two can be positive).
    
    \textbf{Insertions.}
     Level $\ell$ can be recomputed due to an insertion if the buffer set $B_\ell$ reaches size $2^{T-\ell}$ (see Line~3 of \Insertion). At that point $B_\ell$ is emptied.
    Moreover, at the beginning of the algorithm $B_{\ell}$ was empty, and it is augmented only by $\Insertion$ (by at most one element per insertion).
    This altogether means that it takes at least $2^{T-\ell}$ insertions to trigger one recomputation; then, the expected payment from level $\ell$ is at most $4 \cdot X \cdot 2^{T-\ell}$. Thus we get
    \begin{equation} \label{eq:bound-on-insertions}
        \ee{\sum_{t=1}^n \textrm{InsertionPayment}_{t,\ell}} \le \frac{n}{2^{T-\ell}} \cdot 4 \cdot X \cdot 2^{T-\ell} = 4Xn.
    \end{equation}
    
    \textbf{Deletions.}
    \cref{lemma:probably-robust2} shows that the computed solutions are robust to deletions; however, they can also be recomputed due to deletions on levels $\ell'' < \ell$ or insertions on levels $\ell'' \le \ell$.
    We will show that even if an adaptive adversary controls those,
    we are still fine.
    To that end,
    for a fixed level $\ell$
    let us define $E_t$ as the following quantity, which we want to upper-bound:
    \begin{itemize}
        \item $E_t$ is the expected sum of DeletionPayments minus ReceivedBenefits over the next $t$ timesteps,
        where
        \item we assume that level $\ell$ has just been recomputed (and the expectation is, in particular, over the randomness used in that recomputation),\footnote{$E_t$ does not include the DeletionPayment or ReceivedBenefit for that recomputation.} and
        \item an adversary controls recomputations due to deletions on levels $\ell'' < \ell$ and insertions on levels $\ell'' \le \ell$; the adversary can be adaptive based on the solution $(S_{i,\ell})_{i=1,...,R}$ and plays to maximize $E_t$.
    \end{itemize}
    By construction we have
    \begin{equation}
        \ee{\sum_{t=1}^n \textrm{DeletionPayment}_{t,\ell} - \textrm{ReceivedBenefits}_{t,\ell}} \le E_n \,.
    \end{equation}
    Recall that $\definitionofr$.
    We will show by induction on $t$ that
    \begin{equation}
        E_t \le X 2^{T-\ell} + \frac{5 X 2^{T-\ell}}{r} t .
        \label{eq:bound-on-E}
    \end{equation}
    Once we have this, \eqref{eq:bound-on-insertions}--\eqref{eq:bound-on-E} will imply \eqref{eq:payment-benefits}:
    \begin{align*}
        \ee{\sum_t \textrm{Payment}_{t,\ell} - \textrm{ReceivedBenefits}_{t,\ell}} 
        &\le 4Xn + E_n
        \le 4Xn + X 2^{T-\ell} + \frac{5 X 2^{T-\ell}}{r} n \\
        &\le 4Xn + X n + \frac{55 R X}{\epsilon} n
        \le O(RXn/\epsilon) \,.
    \end{align*}
    We are left with proving \eqref{eq:bound-on-E}.
    First, we state without proof the intuitive fact that $E$ is monotone:
    \begin{equation} \label{eq:E-monotone}
        E_t \le E_{t+1}.
    \end{equation}
    The base case $t=0$ of \eqref{eq:bound-on-E} is trivially satisfied. Fix $t>0$.
    Let $D$ be the minimum number of future deletions that will cause some bucket on level $\ell$
    to lose an $\epsilon$ fraction of its elements;
    by \cref{lemma:probably-robust2} we have $\prob{D \le r} \le \frac 14$.
    We will distinguish two cases.
    \begin{itemize}
        \item
        If $D \le r$, we assume the worst possible outcome, which is that we recompute immediately ($D=1$).
        This incurs a DeletionPayment that is at most $4 \cdot X \cdot 2^{T-\ell}$ in expectation,
        as well as a benefit of $X \cdot 2^{T-\ell}$.
        In total we have at most $3 \cdot X \cdot 2^{T-\ell} + E_{t-1}$.
        \item
        If $D > r$, the adversary could either decide to wait the $D$ timesteps,
        or recompute (and pay us a benefit).
        \begin{itemize}
            \item If the adversary waits: in case $t < r$, nothing else happens and there are no more payments or benefits.
            Otherwise, we incur the DeletionPayment (minus ReceivedBenefit) associated with the next recomputation
            -- at most $3 \cdot X \cdot 2^{T-\ell}$ in expectation --
            and the process restarts after $D$ steps.
            Since $E$ is monotone and $D > r$, we upper-bound the future value by $E_{t-r}$.
            \item  If the adversary is going to recompute, it is best to do so immediately:
            recomputing after having waited some $k < D$ steps yields a value of $- X \cdot 2^{T-\ell} + E_{t-k}$,
            and since $E$ is monotone, $k = 1$ is best.
        \end{itemize}
    \end{itemize}
    By the above, we can write
    \begin{align*}
        E_t
        &\le 
        \prob{D \le r} \left( 3 \cdot X \cdot 2^{T-\ell} + E_{t-1} \right) \\
        &+ \prob{D > r} \max \left( - X \cdot 2^{T-\ell} + E_{t-1}, \left( 3 \cdot X \cdot 2^{T-\ell} + E_{t-r} \right) \cdot \mathbbm{1}_{t \ge r} \right) \\
        &\le
        X \cdot 2^{T-\ell} \cdot \left[
        \prob{D \le r} \left( 3 + 1 + \frac{5}{r} (t-1) \right) \right. \\
        &\qquad \qquad \quad + \left. \prob{D > r} \max \left( - 1 + 1 + \frac{5}{r} (t-1), \left( 3 + 1 + \frac{5}{r} (t-r) \right) \cdot \mathbbm{1}_{t \ge r} \right) \right]
    \end{align*}
    where for the second inequality we applied the induction hypothesis
    (and took the $X \cdot 2^{T-\ell}$ factor out).
    Now we claim that the first argument to the max operator is always larger than the second:
    if $t < r$, then we have $-1+1+ \frac{5}{r} (t-1) \ge 0$,
    and otherwise
    we have $\frac{5}{r}(t-1) \ge 4+\frac{5}{r}(t-r)$ because $\frac{5}{r}(r-1) \ge 4$.
    Thus we continue:
    \begin{align*}
        E_t &\le X \cdot 2^{T-\ell} \cdot \left[
        \prob{D \le r }\left( 4 + \frac{5}{r} (t-1) \right) + \prob{D > r} \frac{5}{r} (t-1) \right]
        \\
        &= X \cdot 2^{T-\ell} \cdot \left[ \frac{5}{r} (t-1) + \prob{D \le r } \cdot 4 \right] \\
        &\le X \cdot 2^{T-\ell} \cdot \left[ \frac{5}{r} (t-1) + 1 \right] \\
        &\le X \cdot 2^{T-\ell} + \frac{5 X 2^{T-\ell}}{r} t \,,
    \end{align*}
    which proves \eqref{eq:bound-on-E} and the theorem.
\end{proof}

%!TEX root = 00-Dynamic Submodular Maximization.tex 
\section{Approximation analysis of our algorithm}\label{app:correctness}
Let us start by introducing a key property of our algorithm. Throughout this section, in case $\ell=0$, by $\Supto{r,\ell-1}$ we denote $\Supto{r-1,T}$ and we define $\Supto{-1,T} = \emptyset$.
\begin{observation}\label{obs:emptysets}
	The only time when elements are added to one of the sets $S_{i,\ell}$ is %\cref{line:bucket-invoke-peeling}
Line~4 of \cref{alg:bucket-construct}. Moreover, all the sets $S_{\cdot,\cdot}$ after $S_{i,\ell}$ are empty, i.e.,  
	\[
	\bigcup_{1 \le j \le \numtaus, \ell < r \leq T} S_{j, r} \cup \bigcup_{i+1 \le j \le \numtaus} S_{j, \ell} = \emptyset.
	\]
\end{observation}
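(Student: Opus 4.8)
The plan is to argue by induction over the sequence of operations (invocations of \Init, \Insertion, \Deletion) that at the end of each operation both parts of the statement hold: namely, that every element currently present in some $S_{i,\ell}$ was placed there on Line~4 of \cref{alg:bucket-construct} during the most recent construction, and that the ``upper-triangular tail'' of $S$-sets (those $S_{j,r}$ with $r>\ell$, together with $S_{j,\ell}$ for $j>i$, where $(i,\ell)$ is the last bucket that had a \Peeling call) is empty.

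First I would check the base case: after \Init every set $S_{i,\ell}$ is explicitly set to $\emptyset$ (Line~5 of \cref{alg:initialization}), so the statement holds vacuously. For the inductive step I would inspect each procedure and track every line that touches an $S$-set. The only \emph{additions} to any $S_{i,\ell}$ occur on Line~4 of \cref{alg:bucket-construct} (the $S_{i,\ell}\gets S_{i,\ell}\cup\Peeling(\cdots)$ step), which proves the first sentence of the observation outright; this requires no induction, merely a syntactic scan of Algorithms~\ref{alg:bucket-construct}--\ref{alg:level-construct}, noting that \Insertion and \Deletion only ever \emph{empty} $S$-sets (Lines~\ref{line:empty-S-insertion} and~\ref{line:empty-S-deletion}) or remove the single deleted element. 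The substantive part is the emptiness of the tail. Here I would use the control flow of \LevelConstruct: when \LevelConstruct$(\ell)$ is called, it first empties $B_\ell$, then loops $i=1,\dots,\numtaus$ calling \BucketConstruct$(i,\ell)$; a \Peeling call (hence a nonempty contribution to $S_{i,\ell}$) happens only while $|\Supto{\numtaus,T}|<k$. There are two ways the process stops growing the tail. Either $|\Supto{\numtaus,T}|$ reaches $k$ during some \BucketConstruct$(i,\ell)$, in which case Line~\ref{line:reset-all-A-i-ell} empties all higher-level $A$-sets and, crucially, the recursive call on Line~\ref{line:recursive-calls-level-construct} is \emph{not} made (its guard $|\Supto{\numtaus,T}|<k$ fails), so no $S_{j,r}$ with $r>\ell$ is ever touched and they retain the $\emptyset$ value they were given on Line~\ref{line:empty-S-deletion}/\ref{line:empty-S-insertion} before the call (or inductively from before); moreover for $j>i$ the inner loop at level $\ell$ terminates early, leaving $S_{j,\ell}$ empty. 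Or the loop over $i$ at level $\ell$ completes without reaching the cardinality bound, and then \LevelConstruct$(\ell+1)$ is invoked, so the ``last bucket with a \Peeling call'' moves to level $\ell+1$ or beyond, and I would re-apply the same case analysis one level up, with the induction on the recursion depth (bounded by $T$).

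The one point to be careful about — and what I expect to be the main (minor) obstacle — is the bookkeeping of \emph{which} $(i,\ell)$ is ``the last bucket,'' i.e., making the informal phrase ``after this set'' in the observation precise, and verifying that the $S$-sets in the tail were indeed reset to $\emptyset$ \emph{before} the current \LevelConstruct cascade began. For \Deletion this is Line~\ref{line:empty-S-deletion}, which empties $S_{i',\ell'}$ for all $\ell\le\ell'\le T$ and all $i'$, i.e., a superset of the tail; for \Insertion it is Line~\ref{line:empty-S-insertion}, emptying everything from level $\ellstar$ up; and \LevelConstruct$(\ell')$ called recursively only ever writes to level $\ell'$ and empties strictly higher levels on Line~\ref{line:reset-all-A-i-ell} when it short-circuits. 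So in every case the tail sets are $\emptyset$ at the moment the cascade reaches (and stops at) the last active bucket, and nothing subsequently writes to them. Assembling these observations, together with \cref{obs:emptysets}'s first sentence, gives the displayed identity.
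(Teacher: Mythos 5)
Your proof is correct and takes essentially the same route as the paper's (which is just a terse version of the same facts: the $S$-sets at all levels $\ge \ell$ are emptied on Line~5 of \Insertion and Line~8 of \Deletion before \LevelConstruct is invoked, additions happen only on Line~4 of \BucketConstruct, and the cascade processes buckets in increasing order of $i$ and then $\ell$). The only cosmetic slip is your remark that the inner loop at level $\ell$ ``terminates early'' once $|\Supto{\numtaus,T}| \ge k$ — in fact \BucketConstruct is still invoked for the remaining indices $j>i$, but its guard prevents any \Peeling call, so those $S_{j,\ell}$ stay empty and your conclusion is unaffected.
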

\begin{proof}
	This follows from the fact that we empty all the abovementioned sets in %\cref{line:empty-S-insertion}
Line~5 of \cref{alg:insertion} and %\cref{line:empty-S-deletion}
Line~8 of \cref{alg:deletion} before calling \LevelConstruct, and those are the only lines that \BucketConstruct is called from. 
\end{proof}
\theoremupperbound*
\begin{proof}
Let $t_\ell \le i$ be the largest timestamp until the update $i$ such that $\LevelConstruct(\ell')$, for any $\ell' \le \ell$, was triggered by update $t_\ell$. Intuitively, $t_\ell$ represents the last time when the level $\ell$ was recomputed. We now justify why we consider $\ell' \le \ell$ as opposed to only $\ell' = \ell$. First, observe that $\LevelConstruct$ is invoked only via $\Insertion$ and $\Deletion$. When $\LevelConstruct(\ell')$ is invoked via those methods, then for all $\ell \ge \ell'$ the sets $S_{j, \ell}$ are set to be  empty and their re-computation is potentially triggered via recursive invocations of $\LevelConstruct(\ell')$. This re-computation stops if at some point the current solution contains $k$ elements and hence $\LevelConstruct(\ell)$ might not be invoked. Nevertheless, $S_{j, \ell}$ gets updated, in that case by becoming the empty set. This also implies that $t_{\ell} \le t_{\ell + 1}$ for every $\ell$.

Let $S_{j, \ell}^t$ be the set $S_{j, \ell}$, $A_{j, \ell}^t$ be the set $A_{j, \ell}$ and $B_{\ell}^t$ be set $B_\ell$ at the time after update $t$.
Define
\[
    X \eqdef \bigcup_{0 \le \ell \le T} \bigcup_{1 \le j \le R} S_{j, \ell}^{t_\ell}.
\]

In our analysis, we lower-bound $f(\SOL_i)$ by relating $\SOL_i$ to $X$.
We distinguish two cases depending on the size of $X$:

\begin{itemize}
\item $|X| \ge k$:

We will prove that $|\SOL_i| \ge (1-\epsilon) |X| \ge (1-\epsilon) k$. That directly implies that $f(\SOL_i) \ge (1-\epsilon) \gopt / 2$, as each element added to $\SOL_i$ \emph{deterministically} has marginal gain at least some $\tau$ (this is ensured on Line~8 of \Peeling).
This is with respect to the current solution at the time when the element is added, which is a superset (due to possible deletions between then and time $i$) of the appropriate prefix of $\SOL_i$; by submodularity, its contribution to the appropriate prefix is also at least $\tau$.
By Line~2 of \cref{alg:initialization}, it holds that $\tau \ge \gopt / (2k)$.

To prove that $|\SOL_i| \ge (1-\epsilon) |X|$, observe that $X$ corresponds to the union of the most recent -- relative to the update $i$ -- sets $S_{j, \ell}$ obtained by invoking \LevelConstruct. Hence $\SOL_i$ is obtained from $X$ by removing some elements via \Deletion. Since removing an $\epsilon$-fraction of elements from $S_{j, \ell}$ would have triggered an invocation of $\LevelConstruct(\ell)$ -- see Line~7 of \Deletion -- we have that no more than $\epsilon |S_{j, \ell}|$ elements are removed from $S_{j, \ell}$ since the timestamp $t_{\ell}$. This shows $|\SOL_i| \ge (1-\epsilon) |X| \ge (1-\epsilon) k$.
\footnote{
It might be tempting to rush to a conclusion that $f(X) \ge |X| \gopt / (2k)$, by following the same analogy we used to deduce $f(\SOL_i) \ge |\SOL_i| \gopt / (2k)$. However, $X$ is the union of different $S_{j, \ell}$ sets constructed at \emph{different} times in the algorithms and such conclusion might be flawed. To be more specific, consider two copies $e_1$ and $e_2$ of the same element, i.e., $f(e_1 | e_2) = f(e_2 | e_1) = 0$. Consider the following scenario. The element $e_1$ is included in $S_1^{t_1}$ at time $t_1$. Then, at time $t_1 + 1$ the element $e_1$ is removed from the system, but it does not trigger a re-computation of $S_1$. Assume that at time $t_2 = t_1 + 2$ the element $e_2$ is inserted into the system, and this insertion triggers a re-computation of $S_2$, which concludes by $S_2^{t_2}$ containing $e_2$. Clearly, $f(e_2 | S_1^{t_1}) = 0$, which is exactly the contribution of $e_2$ given the elements added to $X$ previously. On the other hand, $f(e_2 | S_1^{t_2}) \ge \gopt / (2k)$.
}

\item $|X| < k$: 
In this case we will first lower-bound $f(X)$, and then relate $f(X)$ to $f(\SOL_i)$.

We want to show that for any element $e \in O_i$ it holds that $f(e | X) < \frac{\gopt}{2k}$. To that end, consider any element $e \in O_i$.
Let $\ell'$ be the smallest level such that just after processing update $i$ it holds that $e \notin A_{\ell'} \cup B_{\ell'}$; in the sequel we will show that such an $\ell'$ exists.
Let $B_{-1} = V$; this notation is justified by the fact that for $\ell = 0$ the invocation of $\LevelConstruct$ initializes $A_{i, \ell} = V$ (see Line~6). So, letting $B_{-1} = V$ simplifies the notation and ensures $\ell' \ge 0$.

We can also show that $e \notin A_{\ell'}^{t_{\ell'}} \cup B_{\ell'}^{t_{\ell'}}$ and $e \in A_{\ell' - 1}^{t_{\ell'}} \cup B_{\ell' - 1}^{t_{\ell'}}$ in the following way. First, the only way to remove $e$ from some $A_{i, \ell}$ or $B_{\ell}$ is via $\LevelConstruct$ or $\Deletion$. Second, we have that $\Deletion(e)$ was not invoked as $e \in O_i$. Third, $\LevelConstruct$ is not invoked for levels $\ell'$ or $\ell'-1$ after $t_{\ell'}$ by definition. Fourth, to argue that $e \in A_{\ell' - 1}^{t_{\ell'}} \cup B_{\ell' - 1}^{t_{\ell'}}$ and even that $e \in V$ at time $t_{\ell'}$, it suffices to observe that $\Insertion(e)$ was not invoked after $t_{\ell'}$ -- if that was the case, then it would hold that $e \in B_{\ell'}$ as well.

When $\LevelConstruct(\ell')$ was invoked after update $t_{\ell'}$, it holds that $e \in A_{\ell' - 1}^{t_{\ell'}} \cup B_{\ell' - 1}^{t_{\ell'}}$. That further implies that $e$ is considered while updating $A_{\ell'}$ in order to obtain $A_{\ell'}^{t_{\ell'}}$; see Line~4 of \LevelConstruct. The reason why after that invocation of $\LevelConstruct(\ell')$ it holds that $e \notin A_{\ell'}^{t_{\ell'}}$ is because the sets $A_{j, \ell'}$ were pruned via \BucketConstruct invocations on Line~8 of \LevelConstruct and as a result $f\rb{e | \Supto{j, \ell'}^{t_{\ell'}}} \notin [\tau_j, \tau_{j - 1}]$ for all valid $j$.

We now make several simple observations that enable us to conclude that $f\rb{e | \Supto{R, \ell'}^{t_{\ell'}}} < \tau_R = \gamma / (2k)$.
\begin{itemize}
    \item First, $f(e) \le \OPT_i \le \gamma = \tau_0$, i.e., $f(e | Y) \le \tau_0$ for any set $Y$.
    \item Second, due to submodularity we have $f\rb{e | Y} \ge f\rb{e | Y'}$ for any $Y \subseteq Y'$.
In other words, if the marginal gain of $e$ with respect to the so far chosen elements is less than some $\tau_j$, then its marginal cannot be bigger than $\tau_j$ by adding more elements to the current solution.
    \item Third, and the final observation, \BucketConstruct is invoked from \LevelConstruct in the order of decreasing thresholds $\tau$. This property is crucial to ensure that the algorithm does not ``miss'' to test $e$ against the interval it belongs to.
\end{itemize}
By induction on $j=1,...,R$ we can show $f\rb{e | \Supto{j, \ell'}^{t_{\ell'}}} < \tau_j$.
Together these observations lead to $f\rb{e | \Supto{R, \ell'}^{t_{\ell'}}} < \tau_R = \gamma / (2k)$.

We would also like to argue that $f(e | X) \le \gamma / (2k)$. To that end, recall that for every $\ell$ it holds that $t_{\ell} \le t_{\ell + 1}$. Hence, for $\ell \le \ell'$ it holds that $S_{j, \ell}^{t_{\ell'}} \subseteq S_{j, \ell}^{t_{\ell}} \subseteq X$.
Less formally, it means that the last time $S_{\ell}$ got re-computed did not happen after the last time $S_{\ell'}$ got re-computed, and between the timestamps $t_{\ell}$ and $t_{\ell'}$ the set $S_{\ell}$ might have only lost some elements due to removals.
This implies $\Supto{R, \ell'}^{t_{\ell'}} \subseteq X$, which implies $f(e | X) < \gamma / (2k)$.

\paragraph{Existence of $\ell'$.}
It remains to discuss why such $\ell'$ exists.
After update $t_T$, $\LevelConstruct(T)$ must be called, since this is the only way how the condition in %~\cref{lvlstop} 
Line~13 can be false; for recall that we are considering the case when $|X| < k$ (this is the only place where we use this),
and $X$ is a superset of $\Supto{R,T}^{t_T}$ (as argued in the previous paragraph).
This also results in calling $\BucketConstruct(r,T)$ for all $1 \leq r \leq R$ in %\cref{line:level-construct-loop}
Line~2 of \cref{alg:level-construct}. Notice that $\BucketConstruct(r,T)$ stops only if $|A_{r,T}| = 0$ (in %\cref{line:condition-bucket-construct}
Line~6). Since the invocation $\LevelConstruct(T)$ sets $B_T = \emptyset$ as well, this guarantees that such $\ell'$ exists.

\paragraph{Lower-bounding $f(X)$ in terms of $\OPT_i$.}
We just proved that for each $e \in O_i$,
\[
    f(e | X) \leq  \frac{\gopt}{2k}.
\]
Applying  the above inequality for all the elements $e\in O_i$ along with submodularity, we get that
\[
    f(O_i | X) \leq  k\cdot\frac{\gopt}{2k} \leq \frac{\gopt}{2}.
\]
Moreover, by submodularity and monotonicity, we have that
\[
    f(O_i) \leq f(X) + f(O_i | X).
\]
Combining the above two inequalities we get that

\begin{align}\label{cor:secondcase_j}
    f(X) \geq \OPT_i-\frac{\gopt}{2} \geq \OPT_i - \frac{1+\epsilon_p}{2}\OPT_i \geq \frac{1-\epsilon_p}{2}\OPT_i \,.
\end{align}

\paragraph{Comparing $\SOL_i$ and $X$.}
In the rest of this proof, we show that $f(\SOL_i) \geq \rb{1-\epsilon(1+\epsilon_1)}f(X)$. (Recall that we are in the case where $|X| < k$.)

Consider any $S_{r,\ell}$ ($1 \leq  r \leq \numtaus$, $0 \leq \ell \leq T$); we know that when $\LevelConstruct(\ell)$ was called, we had $f\rb{S_{r,\ell}^{t_\ell}|\Supto{r - 1, \ell}^{t_\ell}} \geq \tau_r|S_{r,\ell}^{t_\ell}|$; this again follows because the algorithm only adds elements with marginal gain at least $\tau_r$.\footnote{By $\Supto{0,\ell}$ we denote $\Supto{R,\ell-1}$. Moreover, we let $\Supto{r,-1} = \emptyset$, for any $r$.}
Moreover, at most an $\epsilon$-fraction
of the elements of $S_{r,\ell}^{t_\ell}$ could have been removed; otherwise, Line~7 of \cref{alg:deletion} would trigger.
Note that $S^i_{r,\ell}$ corresponds to $S_{r,\ell}^{t_\ell}$ after these deletions.
The marginal contribution of each element in $S_{r,\ell}^{t_\ell}$ with respect to $\Supto{r-1,\ell}^{t_\ell}$ is at most $\tau_{r-1} = (1+\epsilon_1)\tau_r$; see Line~2 of \cref{alg:bucket-construct}. By submodularity we get that
\begin{align}
f(S^i_{r,\ell}|\Supto{r-1, \ell}^{t_\ell}) &\geq f(S_{r,\ell}^{t_\ell}|\Supto{r-1, \ell}^{t_\ell})-|S_{r,\ell}^{t_\ell}| \cdot \epsilon (1+\epsilon_1) \cdot \tau_r \nonumber \\
& \ge f(S_{r,\ell}^{t_\ell}|\Supto{r-1, \ell}^{t_\ell}) \cdot \rb{1 - \epsilon (1+\epsilon_1)}. \label{eq:S_r_ell'-lower-bound}
\end{align}

Define $\Xupto{r, \ell}$ in a similar way as $\Supto{r, \ell}$
\[
    \Xupto{r, \ell} \eqdef \bigcup_{1 \le j \le \numtaus, 0 \le r < \ell} S_{j, r}^{t_r} \cup \bigcup_{1 \le j \le i} S_{j, \ell}^{t_\ell}.
\]
Now for all $r$ and $\ell$, let $\Supto{r,\ell}^i$ denote the set $\Supto{r,\ell}^{t_\ell}$ after the $i$-th operation.
Considering that $f$ is submodular, and $\Supto{r -1,\ell}^i \subseteq \Supto{r -1,\ell}^{t_\ell} \subseteq \Xupto{r -1,\ell}$,
we get
\begin{eqnarray*}
f(S^i_{r,\ell}|\Supto{r-1, \ell}^i)  & \geq & f(S^i_{r,\ell}|\Supto{r-1, \ell}^{t_\ell}) \\
 & \stackrel{\eqref{eq:S_r_ell'-lower-bound}}{\geq} & f(S_{r,\ell}^{t_\ell}|\Supto{r - 1,\ell}^{t_\ell}) \rb{1-\epsilon(1+\epsilon_1)} \\
 & \geq & f(S_{r,\ell}^{t_\ell}|\Xupto{r - 1,\ell}) \rb{1-\epsilon(1+\epsilon_1)}.
\end{eqnarray*}
By adding up the above marginal values over $r$ and $\ell$, we get
\begin{align}
f(\Supto{\numtaus,T}^i) \geq \rb{1-\epsilon(1+\epsilon_1)} f(X). 
\end{align}

So $f(\SOL_i) \geq \rb{1-\epsilon(1+\epsilon_1)} f(X).$ This along with \cref{cor:secondcase_j} concludes the proof:
\begin{align*}
f(\SOL_i) &\geq (1-\epsilon_p-\epsilon(1+\epsilon_1)) \frac{\OPT_i}{2}.
\end{align*}
\end{itemize}
\end{proof}

%!TEX root = 00-Dynamic Submodular Maximization.tex
\section{\EstimateMean subroutine}\label{app:peeling}

The \EstimateMean subroutine~\cite{FMZSODA} is a standard unbiased
estimator for the mean of a Bernoulli distribution.
Since $\cD_t$ is a uniform distribution over indicator random
variables, it is in fact a Bernoulli distribution.
The guarantees of in \Cref{lem:estimator} are consequences of Chernoff bounds.

\begin{algorithm}
  \caption{\EstimateMean}
  \label{alg:estimate_mean}
  \vspace{0.1cm}
  \textbf{Input:} access to a Bernoulli distribution $\mathcal{D}$,
    error $\hat{\epsilon_p}$
  \begin{algorithmic}[1]
    \STATE Set failure probability
        $\delta \leftarrow \hat{\epsilon_p}/n^2$
    \STATE Set number of samples
      $m \leftarrow 16 \ceil{\log(2/\delta)/\hat{\epsilon_p}^2}$
    \STATE Sample $X_1, X_2,\dots, X_m \sim \mathcal{D}$
    \STATE Set $\overline{\mu} \leftarrow \frac{1}{m} \sum_{i=1}^m X_i$
    \IF{$\overline{\mu} \le 1 - 1.5\hat{\epsilon_p}$}
      \STATE \textbf{return} \texttt{true}
    \ENDIF
    \STATE \textbf{return} \texttt{false}
  \end{algorithmic}
\end{algorithm}

\begin{lemma}\label{lem:estimator}
For any Bernoulli distribution $\mathcal{D}$,
\EstimateMean uses $O(\log(\hat{\epsilon_p}^{-1}n)/\hat{\epsilon_p}^{2})$
samples to correctly report one of the following properties
with probability at least
$1 - \delta$:
\begin{enumerate}
  \item If the output is \textnormal{\texttt{true}}, then the mean of $\mathcal{D}$ is $\mu \le 1 - \hat{\epsilon_p}$.
  \item If the output is \textnormal{\texttt{false}}, then the mean of $\mathcal{D}$ is $\mu \ge 1 - 2\hat{\epsilon_p}$.
\end{enumerate}
Here $\delta$ is as defined in Line~1 of algorithm \EstimateMean. 
\end{lemma}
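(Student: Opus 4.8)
Write $\mu = \E_{X \sim \mathcal{D}}[X]$ for the true mean of the Bernoulli distribution $\mathcal{D}$. The sample-complexity claim is immediate: \EstimateMean draws exactly $m = 16\ceil*{\log(2/\delta)/\hat{\epsilon}^2} = O(\log(\delta^{-1})/\hat{\epsilon}^2)$ samples. So the whole content is the correctness statement, which I would prove by a single concentration inequality followed by a short case analysis. The plan is to introduce the ``good event''
\[
  G = \set*{\abs*{\overline{\mu} - \mu} \le \hat{\epsilon}/2},
\]
where $\overline{\mu} = \frac1m\sum_{i=1}^m X_i$ is the empirical mean computed by the algorithm, to show that $\Pr[G] \ge 1 - \delta$, and then to verify that whenever $G$ occurs, both asserted properties of the output hold. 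Since the output of \EstimateMean is a deterministic function of $\overline{\mu}$, this establishes the lemma.

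For the bound on $\Pr[G]$: the $X_i$ are i.i.d.\ random variables taking values in $\{0,1\}$ with mean $\mu$, so by a standard Chernoff/Hoeffding bound (cf.\ \cite{bansal2006santa}),
\[
  \Pr\!\left[\abs*{\overline{\mu} - \mu} > \hat{\epsilon}/2\right] \le 2\exp\!\left(-2m(\hat{\epsilon}/2)^2\right) = 2\exp\!\left(-m\hat{\epsilon}^2/2\right).
\]
Plugging in $m \ge 16\log(2/\delta)/\hat{\epsilon}^2$ makes the exponent at most $-8\log(2/\delta)$, so the right-hand side is at most $2(\delta/2)^8 \le \delta$, using $\delta \le 1$ (indeed $\delta = 2\hat{\epsilon}^2/(k\log k) < 1$ in the relevant regime of $\hat{\epsilon}$ and $k$). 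Hence $\Pr[G] \ge 1 - \delta$.

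It then remains to check the two properties on the event $G$. If the algorithm outputs \texttt{true}, the test $\overline{\mu} \le 1 - 1.5\hat{\epsilon}$ passed, so on $G$ we get $\mu \le \overline{\mu} + \hat{\epsilon}/2 \le 1 - \hat{\epsilon}$, which is Property~1. If instead it outputs \texttt{false}, then $\overline{\mu} > 1 - 1.5\hat{\epsilon}$, so on $G$ we get $\mu \ge \overline{\mu} - \hat{\epsilon}/2 > 1 - 2\hat{\epsilon}$, which is Property~2. Thus on $G$ the reported property is always correct, and the lemma follows.

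As for difficulty: this is essentially a textbook argument, so there is no real obstacle; the only point requiring a moment's care is confirming that the constant $16$ in the definition of $m$ is large enough to push the Hoeffding tail below $\delta$ — it is, with substantial slack — together with the trivial observation that $\delta < 1$ so that $2(\delta/2)^8 \le \delta$.
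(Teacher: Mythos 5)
Your proposal is correct and takes essentially the same approach as the paper: both reduce the lemma to showing that the empirical mean concentrates, i.e.\ $\Pr\left[\,\abs*{\overline{\mu}-\mu} \ge \hat{\epsilon}/2\,\right] \le \delta$, and then compare against the $1-1.5\hat{\epsilon}$ threshold. The only differences are cosmetic: you invoke the standard two-sided additive Hoeffding bound, which makes the exponent bookkeeping immediate, whereas the paper uses the specific Chernoff bounds of \cref{lem:chernoff} (with the $\min\parens*{\tfrac15, \tfrac{a}{4m\mu}}$ exponent) and hence needs a short case analysis; you also spell out the final threshold comparison on the good event, which the paper states as sufficing without writing it out.
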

\begin{proof}
By construction, the number of samples is
$m = 16 \ceil{\log(2/\delta)/\hat{\epsilon_p}^2} \le 33 \log(n / \hat{\epsilon_p}) / \hat{\epsilon_p}^2$.
To show the correctness of \EstimateMean,
it suffices to prove that
$\Prob{}{\abs*{\overline{\mu} - \mu} \ge \hat{\epsilon_p}/2} \le \delta$.
Letting $X = \sum_{i=1}^m X_i$, this is equivalent to
\[
  \Prob{}{\abs*{X - m\mu} \ge \frac{\hat{\epsilon_p} m}{2}}
  \le \delta.
\]

Using the Chernoff bounds in \Cref{lem:chernoff} and a union bound,
for any $a > 0$ we have
\[
  \Prob{}{\abs*{X - m\mu} \ge a} \le
  e^{-\frac{a^2}{2m\mu}} + e^{-a \min\parens*{\frac{1}{5}, \frac{a}{4m\mu}}}.
\]
Let $a = \hat{\epsilon_p} m /2$ and consider the exponents of the two terms 
separately.
Since $\mu \le 1$, we bound the left term by
\[
  \frac{a^2}{2m\mu} = \frac{\hat{\epsilon_p}^2 m^2}{8m\mu}
  \ge \frac{\hat{\epsilon_p}^2}{8\mu} \cdot \frac{16 \log(2/\delta)}{\hat{\epsilon_p}^2}
  \ge \log(2/\delta).
\]
For the second term, first consider the case when $1/5 \le a/(4m\mu)$.
For any $\hat{\epsilon_p} \le 1$, it follows that
\[
  a \min\parens*{\frac{1}{5},\frac{a}{4m\mu}} = \frac{1}{5}
  \ge \frac{\hat{\epsilon_p}}{10} \cdot \frac{16\log(2/\delta)}{\hat{\epsilon_p}^2}
  \ge \log(2/\delta).
\]
Otherwise, we have $a/(4m\mu) \le 1/5$, and
by the previous analysis we have
$a^2/(4m\mu) \ge \log(2\delta)$.
Therefore, in all cases we have
\[
  \Prob{}{\abs*{X - m\mu} \ge \frac{\hat{\epsilon_p} m}{2}} 
  \le 2e^{-\log(2/\delta)}\\
  = \delta,
\]
which completes the proof.
\end{proof}

\begin{lemma}[Chernoff bounds, \cite{bansal2006santa}]
\label{lem:chernoff}
Suppose $X_1,\dots,X_n$ are binary random variables such that
  $\Prob{}{X_{i}=1} = p_i$. Let $\mu = \sum_{i=1}^n p_i$ and
$X = \sum_{i=1}^n X_i$. Then for any $a > 0$, we have
\[
  \Prob{}{X - \mu \ge a} \le e^{-a \min\parens*{\frac{1}{5}, \frac{a}{4\mu}}}.
\]
Moreover, for any $a > 0$, we have
\[
  \Prob{}{X - \mu \le - a} \le e^{-\frac{a^2}{2\mu}}.
\]

\end{lemma}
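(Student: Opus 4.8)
The statement is the textbook Chernoff--Bernstein bound for a sum of independent Bernoulli variables, and the plan is to derive each of the two tails by the exponential moment method, in each case reducing to a single scalar inequality in $\delta := a/\mu$.

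\emph{Upper tail.} For a parameter $t > 0$, Markov's inequality applied to $e^{tX}$ gives, by independence,
\[
\Pr\!\left[X - \mu \ge a\right] \le e^{-t(\mu+a)}\,\E\!\left[e^{tX}\right] = e^{-t(\mu+a)}\prod_{i=1}^{n}\E\!\left[e^{tX_i}\right].
\]
Since $\E[e^{tX_i}] = 1 + p_i(e^t-1) \le \exp\!\left(p_i(e^t-1)\right)$ and $\sum_i p_i = \mu$, the right-hand side is at most $\exp\!\left(\mu(e^t-1) - t(\mu+a)\right)$. Setting $t = \ln(1+\delta)$ with $\delta = a/\mu$ collapses the exponent to $-\mu\bigl((1+\delta)\ln(1+\delta) - \delta\bigr)$, so it remains to prove the scalar claim
\[
(1+\delta)\ln(1+\delta) - \delta \ \ge\ \delta\,\min\!\left(\tfrac{1}{5},\,\tfrac{\delta}{4}\right) \qquad \text{for all } \delta > 0.
\]
For $\delta \le \tfrac45$ I would lower-bound the left side by $\delta^2/4$ using an elementary Taylor/convexity estimate (for instance $(1+\delta)\ln(1+\delta) - \delta \ge \delta^2/(2 + 2\delta/3)$, which is $\ge \delta^2/4$ on this range), matching $\delta\cdot(\delta/4)$; for $\delta > \tfrac45$ I would use that $f(\delta) := (1+\delta)\ln(1+\delta) - \delta$ satisfies $f'(\delta) = \ln(1+\delta)$, hence $f(\delta) - \delta/5$ is increasing there and positive at $\delta = \tfrac45$, giving $f(\delta) \ge \delta/5$.

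\emph{Lower tail.} Symmetrically, for $t > 0$, Markov applied to $e^{-tX}$ gives $\Pr[X - \mu \le -a] \le \exp\!\left(\mu(e^{-t}-1) + t(\mu-a)\right)$; the case $a \ge \mu$ is vacuous because $X \ge 0$. Taking $t = -\ln(1-\delta)$ with $\delta = a/\mu \in (0,1]$ turns the exponent into $-\mu\bigl(\delta + (1-\delta)\ln(1-\delta)\bigr)$, so it suffices to check $\delta + (1-\delta)\ln(1-\delta) \ge \delta^2/2$. This holds because the left side vanishes at $\delta = 0$ and has derivative $-\ln(1-\delta) - \delta$, which is nonnegative on $(0,1)$ since $-\ln(1-\delta) \ge \delta$ there.

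The only part requiring care --- and it is pure bookkeeping --- is the piecewise scalar inequality for the upper tail: one must pick the crossover point $\delta = \tfrac45$ consistently with the constants $\tfrac15$ and $\tfrac14$ appearing in the statement, and verify the two elementary bounds on the respective ranges. Everything else is the standard Chernoff computation; alternatively, since the lemma is quoted verbatim from \cite{bansal2006santa}, one may simply cite it.
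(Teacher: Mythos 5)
Your proof is correct, but note that the paper does not prove this lemma at all: it is imported verbatim from \cite{bansal2006santa} and used as a black box in the proof of \cref{lem:estimator}, which is exactly the fallback you mention in your last sentence. What you supply instead is a self-contained derivation by the standard exponential-moment method: bound $\E[e^{\pm tX_i}]\le \exp(p_i(e^{\pm t}-1))$, optimize $t$ to get the Cram\'er-type exponents $\mu\bigl((1+\delta)\ln(1+\delta)-\delta\bigr)$ and $\mu\bigl(\delta+(1-\delta)\ln(1-\delta)\bigr)$ with $\delta=a/\mu$, and then verify the two scalar inequalities. Your bookkeeping checks out: on $\delta\le\frac45$ the Bernstein-type bound $(1+\delta)\ln(1+\delta)-\delta\ge\delta^2/(2+2\delta/3)\ge\delta^2/4$ holds (the last step needs only $\delta\le 3$), and on $\delta\ge\frac45$ the function $(1+\delta)\ln(1+\delta)-\delta-\delta/5$ has derivative $\ln(1+\delta)-\frac15>0$ and value $\approx 0.098>0$ at $\delta=\frac45$; the lower-tail inequality $\delta+(1-\delta)\ln(1-\delta)\ge\delta^2/2$ is verified as you say. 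Two small points you should make explicit: the $X_i$ must be assumed independent (the statement omits this, but it is needed for the product step and is satisfied where the paper applies the lemma, since \EstimateMean draws i.i.d.\ samples), and the boundary case $a=\mu$ of the lower tail should be handled separately, e.g.\ via $\Pr[X=0]=\prod_i(1-p_i)\le e^{-\mu}\le e^{-\mu/2}$, since $t=-\ln(1-\delta)$ degenerates at $\delta=1$. The trade-off is the obvious one: citing \cite{bansal2006santa} keeps the appendix short, while your derivation makes the specific constants $\frac15$ and $\frac14$ verifiable without consulting the reference.
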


%!TEX root = 00-Dynamic Submodular Maximization.tex
\section{Combining the ingredients}
\label{sec:combining}
We now recall well-known techniques that can be used to remove the assumptions we made while designing our algorithm. First, we assumed that we have a tight estimate of $\OPT$, e.g., the value of $\gopt$ in \cref{theorem:upper-bound}. This assumption can be removed by considering geometrically increasing guesses $\gopt = (1+\epsilon_p)^i$ of $\OPT$, and for each of the guesses executing a separate instance of our algorithm. Even though $\OPT$ potentially changes from operation to operation, at each point one of the guesses is correct up to a small multiplicative factor. A similar approach was employed in several prior works. After every operation, we return the maximum-value solution over all $\gopt$'s. \cref{theorem:upper-bound} shows that, for the $\gopt$ value that is close to the true optimum value at that time, the instance parametrized by $\gopt$ returns a solution of high value. Moreover, the number of oracle calls is independent of the value of $\gopt$. This results in losing a factor $\log_{1+\epsilon_p}(k\Delta/(\delta\epsilon_p))$ in the number of oracle calls, where $\Delta, \delta$ denote the value of the elements of maximum and minimum value in the universe, respectively. We do not need to know these two values in advance; we simply compute them on the fly and run parallel copies of the algorithm for the currently relevant guesses of $\OPT$. Moreover, we can again use a simple technique to remove the dependency on $\Delta/\delta$. Namely, it suffices to add an element $e$ to those copies of the algorithm where $f(e) \leq \gopt \leq \frac{k}{\epsilon_p}f(e)$ since: (i) while $e$ is not deleted it holds $\OPT \geq f(e)$, therefore we do not need to consider copies with $\gopt < f(e)$; and, (ii) all elements with $\gopt \geq \frac{k}{\epsilon_p}f(e)$ contribute at most only $\epsilon_p\gopt$ to the solution of this copy. Therefore, this increases the number of oracle calls by a factor of $\log_{1+\epsilon_p}(k/\epsilon_p) \le O(\log(k)/\epsilon_p)$,
while decreasing the approximation guarantee by a factor $1-\epsilon_p$.

Second, we assumed that we know the length $n$ of the stream, which is used to upper-bound $|V_t|$. We remove this assumption as follows. We maintain an upper-bound $\tn$ on $n$. The algorithm is initiated by $\tn = 1$. If at some point $n$ equals $\tn$, we restart the algorithm by doubling $\tn$, i.e, by letting $\tn \leftarrow 2 \cdot \tn$, and defining $T = \log{\tn}$ in \cref{alg:initialization}. This affects the number of oracle calls only by a constant factor, and has no effect on the approximation guarantee.

By \cref{theorem:oracle-complexity}, the amortized expected number of oracle queries per update for the base algorithm was $O\left(\frac{\numtaus^3 \log^2 n}{\epsilon_p^2 \cdot \epsilon}\right)$, where $\numtaus = \log_{1+\epsilon_1} (2k) \le O(\log(k)/\epsilon_1)$,
so now it becomes $O\left(\frac{\log^3 k \log^2 n}{\epsilon_1^3 \cdot \epsilon_p^2 \cdot \epsilon} \cdot \frac{\log k}{\epsilon_p} \right)$.
This concludes the proof of our main result:
\mainresult*

%!TEX root = 00-Dynamic Submodular Maximization.tex
\section{Additional experiments}
\label{appendix:extra-experiments}

\input{151-appendix-plots.tex}

\subsection{Value of $f$ after each operation.} \cref{figure:plots-enron} shows the average value of $f$ for different values of $k$ over all operations (insertions/deletions). In \cref{fig:enron-w=30000-k=20-each-operation} we present a more detailed view of the experiment of \cref{figure:plots-enron}~(b) in the following sense. We fix $k = 20$ and split the operations into $400$ equal-sized blocks. For each block, we compute the average value of $f$. We plot those values for all the baselines. This experiment allows us to compare our approach and the baselines over the course of the entire execution. We can see that $\Our$ is very similar to $\CNZOne$ in each block, while $\CNZTwo$ shows around $10\%$ worse performance in the blocks where $f$ has the highest value. In those blocks, $\Sieve$ has around $5\%$ better performance than $\Our$ and $\CNZOne$.
%We refer the reader to \if\fullversion1 \cref{appendix:extra-experiments} \else the Appendix \fi for additional experiments of this kind.
%\jtodo{move to Appendix?}

\begin{figure}[t!]
    \begin{center}
            \begin{tikzpicture}[scale=0.8]
            \begin{axis}[resultTAndF,name=timeplot1,ymin=0.0,legend pos=outer north east, legend columns=3]
						\addplot table[x=t,y=f] {experiments/results/time_enr_win_our00_f.txt};
						\addplot table[x=t,y=f] {experiments/results/time_enr_win_our02_f.txt};
						\addplot table[x=t,y=f] {experiments/results/time_enr_win_cnz01_f.txt};
						\addplot table[x=t,y=f] {experiments/results/time_enr_win_cnz02_f.txt};
						\addplot table[x=t,y=f] {experiments/results/time_enr_win_sieve_f.txt};
						%\addplot table[x=t,y=f] {experiments/results/time_enr_win_random_f.txt};
            \legend{$\OurZ$, $\OurTwo$, $\CNZOne$, $\CNZTwo$, Sieve, \Random}
            \end{axis}
            \end{tikzpicture}
        \caption{
				This plot presents a more detailed analysis of \cref{figure:plots-enron}~(b) for $k=20$; recall that this corresponds to a window-experiment results performed on Enron. The entire execution (i.e., the performed operations) used to obtain the point in \cref{figure:plots-enron}~(b) for $k=20$ is divided into $400$ blocks. This plot shows the average value of $f$ within each block.
				%\jtodo{move to appendix?}
				\label{fig:enron-w=30000-k=20-each-operation}
        }
    \end{center}
\end{figure}
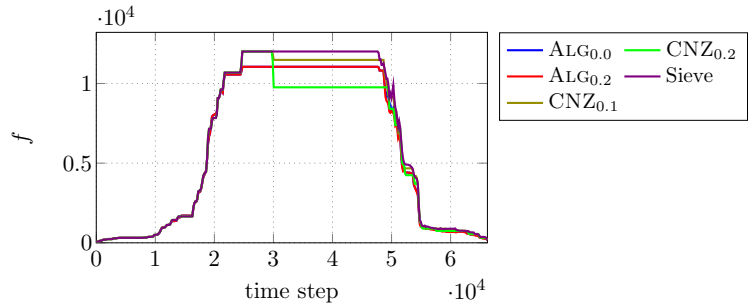

%!TEX root = 00-Dynamic Submodular Maximization.tex
\section{Details of the implemented algorithm}
\label{appendix:simple}
In this section we explain a simpler version of our algorithm, which we use for the implementation and the experiments. We do this as we believe that the bucketing idea is not crucial in real-world applications, even though it is needed to achieve the theoretical guarantee. Namely, given the random structure of layers described in our algorithm, it is very hard for an adversary to delete good elements in a layer without deleting many elements, as the notion of “good” (i.e., contribution of the element with respect to previously chosen elements) heavily depends on the random elements chosen in the previous layers. We believe this situation does not happen often in practice, and so we can simply treat all the elements in a layer in the same way. The only difference is that in this algorithm we do not partition the elements with respect to their contribution and only consider one bucket in each layer. In what follows, we present the algorithm in details for the sake of completeness. One can also read this section without reading the main algorithm.

Similar to before, in this section we assume we know $\OPT$ and first present an algorithm for the case when deletions appear only after all the insertions have been performed. Later, we explain a slight modification to our algorithm to support fully dynamic insertions and deletions. In the following we refer to the set of inserted elements as $V$.

We will construct a hierarchy of sets $H_1, H_2, ...$.
Let $H_1$ be all the elements $e\in V$ such that $f(e) \geq \frac{\OPT}{2k}$, where $\OPT$ is the value of the optimum solution of the set $V$. We first define
\[
	H_1 \leftarrow \setc{e\in V}{f(e) \geq \frac{\OPT}{2k}} \,.
\] 
Afterwards, we select an element $e_1$ from $H_1$ uniformly at random and we define $H_2$ as
\[
H_2 \leftarrow \setc{e\in H_{1}}{f(e|\{e_1\}) \geq \frac{\OPT}{2k}} \,.
\] 
 We repeat this procedure until either we have chosen $k$ elements or for some $1\leq \ell \leq k$, the set $H_{\ell}$ has become empty. More precisely, for any $2 \leq \ell \leq k$, we define 
\[
H_\ell \leftarrow \setc{e\in H_{\ell-1} }{ f(e|\{e_1,\dots,e_{\ell-1}\}) \geq \frac{\OPT}{2k} } \,.
\] 
Then we let $e_\ell$ be a random element from $H_\ell$. We call this simple procedure \emph{a round of peeling}. It is easy to see that the running time of this simple procedure is $O(k|V|)$. This means that the average running time is $O(k)$ per element. 
Let $m$ be the number of elements we have picked this way.
Now we can observe that the solution $E= \{e_1,\dots,e_{m}\}$ is a $1/2$-approximation:
\begin{itemize}
	\item If $|E| = k$, then simply by adding the marginal contributions we get that:
	\[
		f(E) = \sum_{1 \leq \ell \leq k} f(e_\ell|e_1\dots e_{\ell-1}) \geq k\frac{\OPT}{2k} \geq \frac{\OPT}{2}.
	\]
	\item If $|E| < k$, the contribution of any element $e \in V$ to a subset of $E$ is at most $\frac{\OPT}{2k}$ which by submodularity of function $f$ shows that $f(e|E) \leq \frac{\OPT}{2k}$. By summing up this inequality for all the elements $e$ in some optimal solution $O$, from submodularity we get that
	\[
		f(O|E) \leq \frac{\OPT}{2}.
	\] 	
	Now, by monotonicity, it follows that $\OPT \leq f(O|E) + f(E)$ and hence
	\[
		f(E) \geq \frac{\OPT}{2}.
	\]
\end{itemize}
Now consider the deletion of an element $e$. If $e \notin E$, then we do nothing since the current solution has the desired guarantees; we simply remove $e$ from all the sets in our hierarchy. Assume instead that an element $e_\ell \in E$ is deleted. Then we need to recompute the hierarchy beginning from $H_\ell$. Fortunately, this does not change the expected oracle-query complexity much as, intuitively, one needs to delete roughly $O(|H_\ell|)$ elements to delete a fixed element $e_\ell$ with large probability (since it is a randomly chosen element of $H_{\ell}$). Moreover, the time needed for this re-computation is $O(k \cdot |H_\ell|)$, which results in an amortized running time of $O(k)$.
% (In \if\fullversion1 \cref{app:complexity}, \else the Appendix, \fi we make this argument formal for our main algorithm.) This is the main intuition behind our data structure. In the next section we introduce certain additional techniques to improve the runtime to $\poly(\log k)$, which is an exponential improvement over $O(k)$. 
We introduce some additional optimizations.

\paragraph{Improving the insertion algorithm} We introduce two modifications to the previously described algorithm. 
\begin{itemize}
\item {\bf Capping the number of elements in $H$.} We ensure that the number of elements in $H_\ell$ is at most $2^{\log|V| - \ell}$, which also guarantees that the height of the hierarchy is at most $\log |V|+1$. We achieve this by running more than one round of peeling. More precisely, when constructing $H_\ell$, we keep running rounds of peeling  until either its size becomes below the desired threshold (i.e., $|H_\ell| \leq 2^{\log|V| - \ell}$), or we have selected $k$ elements (i.e., $|E| = k$). This does not hurt the running time nor the approximation guarantee.  

\item We maintain buffer sets $B_1,...,B_T$, where $T = \log n +1$ is the maximum possible height of the hierarchy. When an element is inserted, we add it to all the sets $B_\ell$. When, for any $\ell$, the sizes of $B_\ell$ and $H_\ell$ are equal we add the elements of $B_\ell$ to $H_\ell$ (and empty the set $B_\ell$). The main goal of this procedure is to handle updates lazily. During the execution of the algorithm, $B_\ell$ essentially represents those elements that we have not considered in the construction of $H_\ell$. Note that, as described before, the running time for constructing $H_\ell$ is at most $O(k |H_\ell|)$. This guarantees that the amortized running time for insertion is also $O(k)$. Also observe that we merge $B_T$ whenever its size is one. This enables us to maintain a good approximate solution at all times.\footnote{Here we do not provide a formal proof, since in the previous sections we present a more efficient algorithm that obtains better running times for both insertions and deletions.}
\end{itemize}

\paragraph{Improving the deletion algorithm}
Deletions are also handled in a lazy manner: we update our solution only when an $\epsilon$-fraction of elements in a set is deleted. In the next section we explain this idea in more details. Intuitively, we maintain a partitioned version of $H_\ell$ into sets $A_{i,\ell}$ consisting of elements with similar marginal contributions. When an $\epsilon$-fraction of elements in one set $A_{i,\ell}$ is deleted, we trigger a recomputation.  Interestingly, same at the proofs presented before we can show that this significantly reduces the number of re-computations while giving a slightly weaker approximation guarantee, i.e., almost $1/2-\epsilon$. %Finally, we slightly modify the hierarchy to ensure that the sizes of the sets are geometrically decreasing by a factor two. %We intuitively, by carefully picking some elements (potentially more than one) randomly insure that the height of hierarchy is logarithmic.
%Moreover, we also we divide the elements of $H_i$ into sets $A_{i,j}$ of similar contributions. Then when $\epsilon$ fraction of elements in one of $A_{i,j}$ is deleted we recompute. This significantly reduces the number of re-computations while giving a slightly weaker approximation guarantee, i.e., almost $1/2-\epsilon$. We explain this idea in more details in \Peeling Algorithm.

\fi

\end{document}